\newcommand{\leaves}{\ensuremath{L}}
\newcommand{\oh}{\ensuremath{\mathcal{O}}}
\newcommand{\p}{\ensuremath{\mathcal{P}}}
\crefname{enumi}{Invariant}{Invariants}
\title{Drawing Trees and Cacti with Integer Edge Lengths on a Polynomial-Size Grid}
\author{Henry F\"orster}{Technical University of Munich, Germany}{henry.foerster@tum.de}{0000-0002-1441-4189}{} 
\author{Stephen Kobourov}{Technical University of Munich, Germany}{stephen.kobourov@tum.de}{0000-0002-0477-2724}{}
\author{Jacob Miller}{Technical University of Munich, Germany}{jacob.miller@tum.de}{0000-0002-0567-785X}{}
\author{Johannes Zink}{Technical University of Munich, Germany}{johannes.zink@tum.de}{0000-0002-7398-718X}{}
\authorrunning{Henry Förster, Stephen Kobourov, Jacob Miller, Johannes Zink}
\keywords{Harborth's conjecture, tree drawings, cactus drawings, integer edge lengths, grid drawings} 
\begin{document}

\maketitle

\begin{abstract}
    A strengthened version of Harborth's well-known conjecture~-- known as Kleber's conjecture~-- states that every planar graph admits a planar straight-line drawing where every edge has integer length and each vertex is restricted to the integer grid. Positive results for Kleber's conjecture are known for planar 3-regular graphs, for planar graphs that have maximum degree 4, and for planar 3-trees. 
    However, all but one of the existing results are existential and do not provide bounds on the required grid size. In this paper, we provide polynomial-time algorithms for computing crossing-free  straight-line drawings of trees and cactus graphs with integer edge lengths and integer vertex position on polynomial-size integer grids.
%
\end{abstract}

\section{Introduction}

Wagner in 1936~\cite{wagner1936bemerkungen}, F\'ary in 1948~\cite{fary1948straight}, and Stein in 1951~\cite{stein1951convex} proved independently that every planar graph has a crossings-free drawing with edges drawn by straight-line segments.
These results are existential and do not lend themselves to algorithms to actually obtain straight-line crossings-free drawing of planar graphs.
Later, algorithms that do create such drawings
were described by Tutte in 1963~\cite{tutte1963draw}, Chiba, Yamanouchi, and Nishizeki in 1984~\cite{chiba1984linear}, and Read in 1986~\cite{read1986new}.
The underlying difficulty of these algorithms and the generated drawings is that vertex positions are real numbers, rather than integers, and that the ratio between the longest and shortest edge is exponential in the size of the graph.
Such drawings are difficult to read as the short edges mean vertices are placed very close to each other.

To draw planar graphs on a computer screen it helps to have vertices at integer coordinates. In 1990, two algorithms  for embedding planar graphs on integer grids of size polynomial in the size of the graph appeared. 
The algorithm of de Fraysseix, Pach and Polack~\cite{de1990draw} relies on canonical orders whereas
Schnyder's algorithm~\cite{schnyder1990embedding} uses a decomposition of the graph into three trees from which barycentric coordinates can be computed. Both algorithms provide a straight-line  drawing on the
$\oh(n) \times \oh(n)$ grid,
where $n$ is the number of vertices.
Thus, the ratio between the longest and shorted edge is linear rather than exponential, avoiding the problem of unreadable parts where vertices are too close to each other.

Now that we can draw planar graphs with straight-line edges on a (polynomial-size) grid, can we ensure that all edge lengths are integers?
The algorithms of de Fraysseix, Pach and Polack~\cite{de1990draw} and Schnyder~\cite{schnyder1990embedding} can (and indeed do) produce irrational edge lengths, but perhaps this can be avoided.
Harborth's conjecture~\cite{Harborth1987,kemnitz2001plane} asks whether every planar graph has a crossings-free realization with straight-line segments of integer lengths, referred to as integral F\'ary embeddings~\cite{chang2024harborth}.
A stronger version of Harborth's conjecture, called Kleber's conjecture, asks whether an integral F\'ary embedding exists where every vertex is located on the integer grid~\cite{kleber}.
We will refer to these drawings as \emph{truly integral F\'ary embeddings}.

\begin{figure}[t]
    \centering
    \begin{minipage}[b]{0.3\textwidth}
    \begin{subfigure}{\textwidth}
    \centering
    \includegraphics[width=\textwidth,page=1]{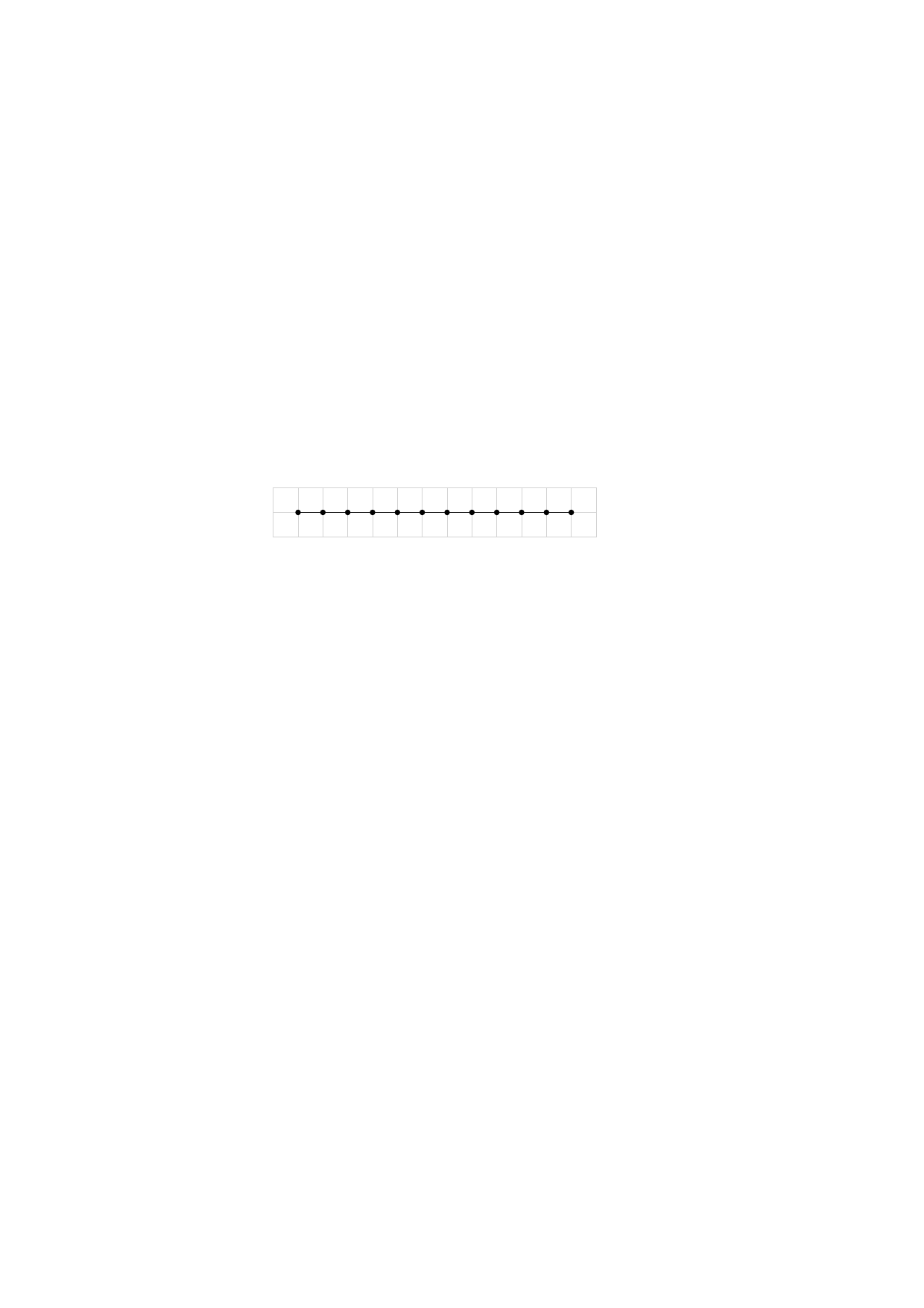}
    \subcaption{}
    \label{fig:example:1}    
    \end{subfigure}

    \begin{subfigure}{0.45\textwidth}
    \centering
    \includegraphics[width=\textwidth,page=3]{integer-examples.pdf}
    \subcaption{}
    \label{fig:example:2}    
    \end{subfigure}
    \hfill
    \begin{subfigure}{0.45\textwidth}
    \centering
    \includegraphics[width=\textwidth,page=4]{integer-examples.pdf}
    \subcaption{}
    \label{fig:example:3}    
    \end{subfigure}
    \end{minipage}
    \hfill
    \begin{minipage}[b]{0.2\textwidth}
    \centering
    \includegraphics[width=\textwidth,page=2]{integer-examples.pdf}
    \subcaption{\nolinenumbers}
    \label{fig:example:4}    
    \end{minipage}
    \hfill
    \begin{minipage}[b]{0.35\textwidth}
    \centering
    \includegraphics[width=\textwidth,page=5]{integer-examples.pdf}
    \subcaption{\nolinenumbers}
    \label{fig:example:5}    
    \end{minipage}
    \caption{Compact truly integral F\'ary embeddings of (a)~a path on $12$ vertices, (b)~a cycle on $12$ vertices, (c)~a cycle on $3$ vertices, (d)~a star on $13$ vertices and (e)~a full binary tree of depth $3$.}
    \label{fig:integer-examples}
\end{figure}
 
Sun~\cite{DBLP:conf/cccg/Sun11} uses rigidity matrices to prove that all planar (2, 3)-sparse graphs admit integral F\'ary embeddings.
A graph $G$ is ($a$, $b$)-sparse if every subgraph $G'$ of $G$ has at most $a |V(G')| - b$ edges.
(2, 3)-sparse graphs include series-parallel, outerplanar, and planar bipartite graphs.
In 2024, Chang and Sun~\cite{chang2024harborth} show this result also for all planar 4-regular graphs,
which fixes some problems of a previous attempt~\cite{DBLP:conf/cccg/Sun13}.

Geelen, Guo, and McKinnon~\cite{geelen2008straight} show that every planar max-degree-3 graph admits a truly integral F\'ary embedding.
Biedl observes that this technique can be applied to all planar (2, 1)-sparse graphs,
which include all max-degree-4 graphs that are non-4-regular.
Independently, Benediktovich~\cite{DBLP:journals/dm/Benediktovich13} shows this result for max-degree-4 graphs that are non-4-regular and for planar 3-trees.\footnote{%
Essentially, for planar 3-trees, it has also been shown by Kemnitz and Harborth~\cite{kemnitz2001plane} before.}
The latter results provide drawings where the vertices have rational coordinates, thus, can be scaled to integers.
However, they make use of a theorem by Berry~\cite{berry} to position vertices on a rational point in an $\varepsilon$-disk around a real point.
The application of Berry's theorem doesn't provide guarantees on the distance towards an integral point.
Thus, if we were to scale the drawing to provide a truly integral F\'ary embedding, no bound on the area can be provided.

Biedl~\cite{biedl2011drawing} shows how to draw planar max-degree-3 graphs with integer edge lengths and integer coordinates.
She employs an algorithm by Kant~\cite{DBLP:conf/wg/Kant92} from 1992 for drawing 3-connected 3-regular graphs on the hexagonal grid.
Then, the drawing is skewed suc that the three resulting slopes are part of Pythagorean triples (see below).
Non-3-connected parts are drawn independently, scale and glue.
Note that the latter step may require a large grid size.

To the best of our knowledge, the algorithm by Biedl~\cite{biedl2011drawing} that produces, for 3-connected 3-regular planar graphs, truly integral F\'ary embeddings on a quadratic-size grid
is the only algorithm in the literature that provides a polynomial upper bound on the area.

\subparagraph*{Our contribution.}
In this paper, we consider the practical question about the grid size of truly integral F\'ary embeddings; see \cref{fig:integer-examples} for a few examples. Specifically, we show that trees and cactus graphs admit truly integral F\'ary embeddings on an $\oh(n^{2})\times \oh(n^{2})$ and  $\oh(n^{3})\times \oh(n^{3})$  integer grids, respectively. In special cases, we achieve better bounds -- namely, for stars, binary trees and trees of depth $d$, we achieve $\Theta(n^2)$, $\mathcal{O}(n^2)$ and $\mathcal{O}(n^2 d^2)$ area, respectively.
Our proofs are constructive and correspond to efficient algorithms.
We analyze the precise constant factors of the grid area occupied by our drawings.

\section{Preliminaries}

The existence of Pythagorean triples is among the oldest observations in mathematics, predating even Euclid. Their algebraic definition and connection to geometry are likely some of the first non-trivial proofs shown to elementary and primary school students. They are defined as all triples of integers $a,b,c$ which satisfy the equation $a^2 + b^2 = c^2$. These triples are exactly the side lengths of right-angled triangles with integer side lengths. In this section, we review a few properties of Pythagorean triples. For the convenience of the reader we provide explicit proofs for most of the statements.

To ensure integer edge lengths on an integer grid, we are restricted to using Pythagorean triples. This is easy to observe; the length of any line segment from $(x_1,y_1)$ to $(x_2,y_2)$ is given by the Euclidean distance $\sqrt{(x_2 - x_1)^2 + (y_2 - y_1)^2}$. This length can only be integer when the sum of square differences in the coordinates is square. In particular, if we enforce that the endpoints of the line segment are integer, then the difference in~$x$ coordinate, the difference in~$y$ coordinate, and the length~$\ell$ of the segment form a Pythagorean triple $(x,y,\ell)$. 

If a triple $a,b,c$ is a Pythagorean triple, then so is $ka, kb, kc$ for any positive integer~$k$.
A triple is called \textit{primitive} if there is no integer $k > 1$ that divides $a,b,$ and $c$ evenly; that is $a,b,$ and $c$ are coprime.
We leverage that each primitive Pythagorean triple forms a triangle with a unique (clock-wise) sequence of internal angles.
In particular for a given (primitive) Pythagorean triple $(x,y,\ell)$, if a vertex $u$ is located at point $(p_x,p_y) \in \mathbb{N}^2$ and has an edge to a not-yet-placed vertex $v$, then placing vertex $v$ at $(p_x+x,p_y+y)$ guarantees both integer coordinates for $v$ and integer edge length $\ell$ for the edge~$uv$.
If $u$ is adjacent to another vertex~$w$,
then using a different primitive Pythagorean triple guarantees,
due to the different angles, that the edges~$uv$ and~$uw$ do not overlap.
Our algorithms rely on the algorithmic generation of triples. To this end, we will rely on a generator referred to as \emph{{Euclid}'s formula} (see, e.g.,~\cite{pythagorean}):

\begin{lemma}
Let $m, n \in \mathbb{N}_{\geq 0}$. Then, the triple $(x,y,\ell)$ with
\begin{equation}
\label{eq:euclid}    x=m^2-n^2, \quad y=2mn,\quad\ell=m^2+n^2
\end{equation}
is a Pythagorean triple.
\end{lemma}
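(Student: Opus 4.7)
The plan is a direct algebraic verification of the Pythagorean identity $x^{2}+y^{2}=\ell^{2}$. Since $m,n\in\mathbb{N}_{\geq 0}$, each of $x=m^{2}-n^{2}$, $y=2mn$, and $\ell=m^{2}+n^{2}$ is automatically an integer (allowing $x$ to be negative when $n>m$, which does not affect the identity). So the only substantive task is to check the squared-sum equation.

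Concretely, I would expand $x^{2}=(m^{2}-n^{2})^{2}=m^{4}-2m^{2}n^{2}+n^{4}$ and $y^{2}=(2mn)^{2}=4m^{2}n^{2}$. Adding these gives $m^{4}+2m^{2}n^{2}+n^{4}$, which factors as $(m^{2}+n^{2})^{2}=\ell^{2}$, completing the verification.

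There is no real obstacle here: this is essentially a one-line computation and is the classical forward direction of Euclid's parametrization. I would, however, briefly remark on the degenerate inputs to make the statement fully rigorous: if $n=0$ the triple becomes $(m^{2},0,m^{2})$, if $m=n$ it becomes $(0,2m^{2},2m^{2})$, and if $m=n=0$ it degenerates to $(0,0,0)$; the identity holds in each case. For the subsequent use in the paper one will want genuine (nondegenerate) right triangles, which amounts to imposing $m>n\geq 1$, and for \emph{primitive} triples the additional conditions that $m$ and $n$ be coprime and of opposite parity; these refinements lie outside the present statement but are the natural follow-ups when the formula is used as a generator.
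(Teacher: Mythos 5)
Your verification is correct and is exactly the paper's own argument: note integrality of $x,y,\ell$ and expand $x^{2}+y^{2}=(m^{4}-2m^{2}n^{2}+n^{4})+4m^{2}n^{2}=(m^{2}+n^{2})^{2}=\ell^{2}$. The extra remarks on degenerate inputs are fine but not needed for the statement.
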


\begin{proof}
Since $m$ and $n$ are integer, so are $x$, $y$ and $\ell$. Moreover, $x^2+y^2=(m^4-2m^2n^2+n^4)+(4m^2n^2)=(m^4+2m^2n^2+n^4)=\ell^2$.
\end{proof}


We next assure that we can generate all primitive Pythagorean triples using {Euclid}'s formula.
The following theorem has been shown for instance in~\cite[Theorem 1]{pythagorean}.

\begin{theorem}[\cite{pythagorean}]
\label{thm:allprimitive}
Let $P=(x,y,\ell)$ be a primitive Pythagorean triple such that $y$ is even. Then, there are $m, n \in \mathbb{N}_{\geq 0}$ such that $P$ is generated using Equation~\eqref{eq:euclid}.
\end{theorem}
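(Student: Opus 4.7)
The plan is to give the standard descent-style proof that turns the equation $x^2+y^2=\ell^2$ into the identity $(y/2)^2 = uv$ for two coprime positive integers $u,v$, and then uses the fact that a product of coprime integers can be a perfect square only if each factor is itself a perfect square.

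First I would extract parity information from primitivity. Since $\gcd(x,y,\ell)=1$ and $y$ is even, I would note that $x$ and $\ell$ must be odd: if either were even, then $\ell^2 - x^2 = y^2$ combined with $y$ even would force the third one to be even as well, contradicting primitivity (a shared factor of $2$). In fact this also gives that $x,y,\ell$ are pairwise coprime, which I would verify briefly.

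Next I would rewrite the Pythagorean equation as
\begin{equation*}
  y^2 \;=\; \ell^2 - x^2 \;=\; (\ell-x)(\ell+x).
\end{equation*}
Since $\ell$ and $x$ are both odd, $\ell-x$ and $\ell+x$ are both even, so I set $\ell-x = 2u$ and $\ell+x = 2v$, giving $(y/2)^2 = uv$. I would then show that $\gcd(u,v)=1$: any common divisor of $u$ and $v$ divides their sum $\ell$ and their difference $x$, but $\gcd(x,\ell)=1$. Since $uv$ is a perfect square and $u,v$ are coprime non-negative integers, each of them is individually a perfect square; write $u=n^2$ and $v=m^2$ with $m,n\in\mathbb{N}_{\geq 0}$.

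Finally I would read off the formulas: $\ell = u+v = m^2+n^2$, $x = v-u = m^2-n^2$, and $y^2 = 4uv = (2mn)^2$, so $y=2mn$ (as $y\geq 0$). This matches \eqref{eq:euclid}. The main obstacle, and the only step requiring genuine number-theoretic input beyond algebra, is the claim that coprime factors of a perfect square are themselves perfect squares; I would justify it in one sentence via unique factorization (every prime in $uv$ appears to an even power, and since $u,v$ share no primes, the exponents split evenly between them).
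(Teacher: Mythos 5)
Your argument is correct and complete: the parity analysis, the factorization $y^2=(\ell-x)(\ell+x)$ with $\ell-x=2u$, $\ell+x=2v$, the coprimality of $u$ and $v$ via $\gcd(x,\ell)=1$, and the unique-factorization step forcing $u=n^2$, $v=m^2$ are exactly the classical proof of Euclid's formula. Note that the paper does not prove this statement at all -- it imports it as~\cite[Theorem~1]{pythagorean} -- so your proposal supplies the standard textbook argument that the paper deliberately omits; it is presumably the same proof as in the cited source, and the only point worth a half-sentence of care is the degenerate case $y=0$ (i.e.\ $u=0$, forcing $x=\ell=1$, $m=1$, $n=0$), which your setup handles trivially.
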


One may now wonder what happens if $y$ is odd.
According to \cref{thm:allprimitive}, the following generator generates all primitive Pythagorean triples where $x$ is even for $m,n\in \mathbb{N}_{\geq 0}$:
\begin{equation}
    \label{eq:euclid2}\tag{1'}    x=2mn, \quad y=m^2-n^2,\quad\ell=m^2+n^2
\end{equation}

Moreover, the following lemma establishes that \eqref{eq:euclid} and \eqref{eq:euclid2} are indeed sufficient to generate all primitive Pythagorean triples. This may be concluded from the observation that $x$ and $y$ must be co-prime~\cite[1.5]{pythagorean}; here, we provide an explicit proof for the following special case of the statement.

\begin{lemma}\label{lem:oneOddOneEven}
    In any primitive Pythagorean triple $(a,b,c)$, exactly one of $a$ and $b$ is odd.\todo{HF: We could replace with the statement that $a$ and $b$ are co-prime, can cite chapter 1 of \cite{pythagorean}. Proof: $\sqrt{(k\alpha)^2+(k\beta)^2}=\sqrt{k^2(\alpha^2+\beta^2)}=k\sqrt{\alpha^2+\beta^2}$}
\end{lemma}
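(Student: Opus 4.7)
The plan is to rule out the two ``bad'' cases separately: both $a$ and $b$ even, and both $a$ and $b$ odd. Together with the fact that ``exactly one'' is a statement about the parities of two numbers (there are only four possibilities), these two exclusions establish the claim.

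First I would handle the both-even case by invoking primitivity directly. If $a$ and $b$ were both even, then $a^2+b^2$ would be divisible by $4$, hence $c^2$ would be even, so $c$ would also be even. Then $2$ would divide each of $a$, $b$, and $c$, contradicting the assumption that $(a,b,c)$ is primitive. This step is essentially a one-liner.

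The second step, excluding the both-odd case, is the more interesting part and is where I expect the only real content of the proof to lie. I would argue modulo $4$: any odd integer $k$ satisfies $k^2 \equiv 1 \pmod 4$, while any even integer satisfies $k^2 \equiv 0 \pmod 4$. So if $a$ and $b$ were both odd, we would get
\[
c^2 \;=\; a^2 + b^2 \;\equiv\; 1 + 1 \;=\; 2 \pmod 4,
\]
but no perfect square is congruent to $2$ modulo $4$ (it must be $0$ or $1$), a contradiction.

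Having excluded both-even and both-odd, the only remaining possibilities are that exactly one of $a$ and $b$ is odd, which is precisely the claim. The main obstacle is really just remembering to use the mod-$4$ observation in the odd-odd case; the even-even case follows trivially from primitivity. No further case analysis or appeal to \cref{thm:allprimitive} is needed.
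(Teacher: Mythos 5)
Your proposal is correct and follows essentially the same two-case structure as the paper's proof: the both-even case is excluded by primitivity (dividing through by $2$), and the both-odd case by a parity obstruction on $c^2=a^2+b^2$. Your mod-$4$ congruence argument is just a streamlined phrasing of the paper's explicit expansion $(2i+1)^2+(2j+1)^2=(2k)^2$, which derives the same contradiction that a perfect square would have to be $2$ modulo $4$.
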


\begin{proof}
    First, assume for a contradiction that both $a$ and $b$ are even, i.e,  $a=2i$ and $b=2j$ for $i,j\in \mathbb{N}_{\geq 0}$. Then, also $c$ is even, i.e., $c=2k$ for $k \in \mathbb{N}_{\geq 0}$. But then $(i,j,k)=\left(\frac{a}{2},\frac{b}{2},\frac{c}{2}\right)$ is Pythagorean triple; a contradiction to the fact that $(a,b,c)$ is primitive.

    Second, assume for a contradiction that both $a$ and $b$ are odd, i.e., $a=2i+1$ and $b=2j+1$ for $i,j\in \mathbb{N}_{\geq 0}$. Then, $a^2$ and $b^2$ are also odd as they are the product of two odd numbers and $c^2$ must be even as it is the sum of two odd numbers. Hence, also $c$ must be even, i.e., $c=2k$ for $k \in \mathbb{N}_{\geq 0}$. But then:
    \begin{eqnarray}
    \notag  a^2+b^2&=&c^2\\
    \notag \Leftrightarrow  (2i+1)^2+(2j+1)^2&=&(2k)^2\\
    \notag \Leftrightarrow  4i^2+4i+1+4j^2+4j+1&=&4k^2\\
    \notag \Leftrightarrow4(i^2+i+j^2+j)+2&=&4k^2\\
    \Leftrightarrow (i^2+i+j^2+j)+\frac{1}{2}&=&k^2\label{eq:oddproof}
    \end{eqnarray}

    The left-hand side of \eqref{eq:oddproof} contains the term $(i^2+i+j^2+j)$ which is a natural number as $i,j \in \mathbb{N}$. Thus, $(i^2+i+j^2+j) + \frac{1}{2}\notin \mathbb{N}$. On the other hand, $k^2\in\mathbb{N}$; a contradiction.
\end{proof}

\cref{thm:allprimitive,lem:oneOddOneEven} immediately imply the following:

\begin{corollary}\label{cor:allprimitive}
Let $P=(x,y,\ell)$ be any primitive Pythagorean triple. Then, there are $m, n \in \mathbb{N}_{\geq 0}$ such that $P$ is generated using Equation~\eqref{eq:euclid} or~\eqref{eq:euclid2}.
\end{corollary}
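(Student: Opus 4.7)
The plan is to derive the corollary by straightforward case analysis on the parity of the components, using \cref{lem:oneOddOneEven} to reduce to the single case handled by \cref{thm:allprimitive}.

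First, I would apply \cref{lem:oneOddOneEven} to the primitive Pythagorean triple $P=(x,y,\ell)$. This tells me that exactly one of $x$ and $y$ is even (and the other is odd). Note that $\ell$ must then be odd, since $x^{2}+y^{2}$ is a sum of one even and one odd square.

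Next I would split into two cases. If $y$ is even, then the hypothesis of \cref{thm:allprimitive} is satisfied directly, so there exist $m,n\in\mathbb{N}_{\geq 0}$ with $P$ generated by \eqref{eq:euclid}. If instead $x$ is even (so $y$ is odd), then the triple $P'=(y,x,\ell)$ — obtained by swapping the two legs — is still a primitive Pythagorean triple, and now its second component $x$ is even. Applying \cref{thm:allprimitive} to $P'$ yields $m,n\in\mathbb{N}_{\geq 0}$ with $y=m^{2}-n^{2}$, $x=2mn$, and $\ell=m^{2}+n^{2}$. Observing that these three identities are precisely the definition of \eqref{eq:euclid2} for $(x,y,\ell)$, we conclude that $P$ is generated by \eqref{eq:euclid2} with the same $m,n$.

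Combining the two cases covers every primitive Pythagorean triple, which is exactly what the corollary asserts. There is no real obstacle here: the content of the result is essentially that \eqref{eq:euclid2} handles the parity case that \eqref{eq:euclid} cannot, and \cref{lem:oneOddOneEven} guarantees that these two parity cases are exhaustive. The only minor subtlety worth spelling out in the write-up is that swapping legs preserves primitivity, so the application of \cref{thm:allprimitive} in the second case is legitimate.
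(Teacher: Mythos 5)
Your argument is correct and matches the paper's intended derivation: the paper likewise treats the corollary as an immediate consequence of \cref{thm:allprimitive} together with \cref{lem:oneOddOneEven}, with \eqref{eq:euclid2} being exactly the leg-swapped form of \eqref{eq:euclid} that covers the case where $x$ is even. Your explicit note that swapping legs preserves primitivity is a harmless extra detail the paper leaves implicit.
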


Equations~\eqref{eq:euclid} and~\eqref{eq:euclid2} provide us with a possibility to generate the Pythagorean triples required by our algorithms.
For our algorithmic results, we are on the other hand also interested in the size of values $x$, $y$ and $\ell$ of Pythagorean triple $(x,y,\ell)$ as these will influence the required area and running time.
A preliminary review of the growth of Pythagorean triples for smaller values of $x$, $y$ and $\ell$ can for instance be found in~\cite[2.7 and Chapter 3]{pythagorean}.
Here, we are interested in bounding the asymptotic growth.
To this end, we will provide a complete proofs as we are not aware of literature explicitly discussing it (despite it following from combining some easy observations with known results on the Farey sequence). 

We first observe the following:

\begin{lemma}\label{lem:primitive_property}
Let $m,n,a,b,k \in \mathbb{N}_{+}$ such that $m > n$, $a > b$ and
\begin{eqnarray}
\label{eq:linear:1} m^2-n^2& = &k(a^2 - b^2) \\
\label{eq:linear:2} 2mn &= &k(2ab) \\
\label{eq:linear:3} m^2+n^2 &= &k(a^2 + b^2).
\end{eqnarray} Then:
\begin{equation}
\label{eq:sameSlope} \frac{m}{n}=\frac{a}{b}
\end{equation}
\end{lemma}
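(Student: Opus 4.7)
The plan is to combine equations~\eqref{eq:linear:1} and~\eqref{eq:linear:3} in a very direct way: equation~\eqref{eq:linear:2} is actually not needed for the conclusion. Adding \eqref{eq:linear:1} and \eqref{eq:linear:3} yields $2m^2 = 2ka^2$, hence $m^2 = ka^2$. Subtracting \eqref{eq:linear:1} from \eqref{eq:linear:3} yields $2n^2 = 2kb^2$, hence $n^2 = kb^2$.

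From these two identities, I would conclude $m^2/n^2 = a^2/b^2$ (dividing is legitimate because $n,b \in \mathbb{N}_{+}$, so in particular $n, b \neq 0$). Since $m, n, a, b$ are all positive, taking the positive square root of both sides gives $m/n = a/b$, which is exactly~\eqref{eq:sameSlope}.

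No obstacles arise: the argument is a two-line linear combination followed by a division, and all denominators are guaranteed nonzero by the hypothesis that the integers lie in $\mathbb{N}_{+}$. The hypotheses $m > n$ and $a > b$ as well as equation~\eqref{eq:linear:2} are not used in this argument; they likely appear in the lemma statement because they are part of the context in which it will later be applied (namely comparing the $m,n$-parameters of a primitive triple against a scaled copy of another triple via~\eqref{eq:euclid}).
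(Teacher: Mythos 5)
Your proposal is correct and follows essentially the same route as the paper's own proof: both add and subtract \eqref{eq:linear:1} and \eqref{eq:linear:3} to obtain $m^2 = ka^2$ and $n^2 = kb^2$, divide, and take the positive square root using positivity of $m,n,a,b$. Your observation that \eqref{eq:linear:2} and the inequalities $m>n$, $a>b$ are not needed also matches the paper's argument, which likewise never uses them.
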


\begin{proof}
We begin by calculating the sum of \eqref{eq:linear:1} and \eqref{eq:linear:3}:
\begin{equation}
\label{eq:linear:4} m^2-n^2 + m^2+n^2  = k(a^2 - b^2) + k(a^2 + b^2) \Leftrightarrow 2m^2 = 2ka^2 \Leftrightarrow k = \frac{m^2}{a^2}
\end{equation}

Next, we compute  the  difference between \eqref{eq:linear:3} and \eqref{eq:linear:1}:
\begin{equation}
\label{eq:linear:5} m^2+n^2 - (m^2 - n^2 ) = k(a^2 + b^2) - k(a^2 - b^2) \Leftrightarrow 2n^2 = 2kb^2 \Leftrightarrow k = \frac{n^2}{b^2}
\end{equation}

Both \eqref{eq:linear:4} and \eqref{eq:linear:5} provide a description of $k$ which we can set equal:

\begin{equation}
\label{eq:linear:6}
\frac{m^2}{a^2}=k=\frac{n^2}{b^2} \Leftrightarrow \frac{m^2}{n^2}=\frac{a^2}{b^2} \Leftrightarrow \left(\frac{m}{n}\right)^2=\left(\frac{a}{b}\right)^2
\end{equation}

Since $m,n,a,b\geq 1$, Equation~\eqref{eq:linear:6} implies \eqref{eq:sameSlope} which concludes the proof.
\end{proof}

Next, we define an ordering $\prec_{\text{prim}}$ of the primitive Pythagorean triples. We only consider Pythagorean triples $(x,y,\ell)$ where $x,y,\ell>0$. Note that this triples are generated by Equations~\eqref{eq:euclid} and~\eqref{eq:euclid2} when $m>n>0$.
Let $P=(x,y,\ell)$ and $P'=(x',y',\ell')$ be two primitive Pythagorean triples with $P\neq P'$ and $x,x',y,y',\ell,\ell' > 0$.
By \cref{cor:allprimitive}, there are integers $m,n\in \mathbb{N}_+$ such that $P$ is generated  when inserting $m$ and $n$ into Equation~\eqref{eq:euclid} or~\eqref{eq:euclid2} and there are integers $m',n'\in \mathbb{N}_+$ such that $P'$ is generated  when inserting $m'$ and $n'$ into Equation~\eqref{eq:euclid} or~\eqref{eq:euclid2}. Note that $m>n$ as $\ell>0$. We now say  that $P \prec_{\text{prim}} P'$ if and only if one of the following applies:
\begin{enumerate*}
    \item $m <m'$,
    \item $m=m'$ and $n < n'$, or,
    \item $m=m'$, $n=n'$ and $P$ is generated using Equation~\eqref{eq:euclid} (i.e., $P'$ is generated using~\eqref{eq:euclid2}).
\end{enumerate*}

Observe that $\prec_{\text{prim}}$ defines a total ordering of the primitive Pythagorean triples. 
Hence, in the following, let $\mathcal{P}=(P_1,P_2,\ldots)$ denote the set of Pythagorean triples ordered with respect to $\prec_{\text{prim}}$, i.e., for $P_i,P_j,\in \mathcal{P}$, we have $P_i \prec_{\text{prim}} P_j$ if and only if $i < j$. Moreover, we say that $P_i$ is the \emph{$i$-th} primitive Pythagorean triple and that $\mathcal{P}_k=(P_1,\ldots,P_k) \subset \mathcal{P}$ is the set of the \emph{first} $k$ primitive Pythagorean triples. Using \cref{lem:primitive_property}, we can bound the numerical values $x_k$, $y_k$ and $\ell_k$ in $P_k=(x_k,y_k,\ell_k)$:

\begin{lemma}\label{lem:triple_size}
    Let $P_k=(x_k,y_k,\ell_k) \in \mathcal{P}$ be the $k$-th Pythagorean triple generated by inserting $m$ and $n$ into Equation~\eqref{eq:euclid} or~\eqref{eq:euclid2} . Then, $m,n \leq \frac{\pi}{\sqrt{3}}\sqrt{k}\in\mathcal{O}\left(\sqrt{k}\right)$ and $x_k,y_k,\ell_k\leq\frac{2\pi^2}{{3}}{k} \in \mathcal{O}(k)$.
    Moreover, for sufficiently large $k$, we have $m > \left(\frac{\pi}{\sqrt{3}}-1\right)\cdot \sqrt{k} \in \Omega(\sqrt{k})$ and $x_k,y_k,\ell_k >  2\left(\frac{\pi}{\sqrt 3}\right)^2\cdot {k} \in \Omega(k)$.
\end{lemma}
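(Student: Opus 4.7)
The plan is to count primitive Pythagorean triples by their generator pairs $(m,n)$ and then invert this count via the classical Farey/Mertens asymptotic for sums of Euler's totient function.

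By \cref{cor:allprimitive}, every triple in $\mathcal{P}$ arises from a unique pair $(m,n)$ with $m > n \geq 1$ through \eqref{eq:euclid} or \eqref{eq:euclid2}, and primitivity forces $\gcd(m,n) = 1$ and $m \not\equiv n \pmod 2$; \cref{lem:primitive_property} guarantees that distinct valid pairs give distinct triples. Each valid pair contributes exactly two triples, one per equation. Letting $V(M)$ denote the number of valid pairs with $m \leq M$, the number of triples in $\mathcal{P}$ whose generator parameter is at most $M$ is $2V(M)$, so $m_k$ is the least $M$ with $2V(M) \geq k$.

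The next step is to establish $V(M) = \frac{2M^2}{\pi^2} + O(M \log M)$. Splitting by the parity of $m$, one has $V(M) = \sum_{m \leq M,\ m \text{ even}} \phi(m) + \frac{1}{2}\sum_{m \leq M,\ m \text{ odd}} \phi(m)$, since for even $m$ every residue coprime to $m$ is odd, while for odd $m$ the involution $r \mapsto m - r$ (which preserves coprimality and flips parity) shows that exactly half of the coprime residues are even. Combining this with the classical Farey identity $\sum_{k \leq M} \phi(k) = \frac{3M^2}{\pi^2} + O(M \log M)$ and a M\"obius-type separation of the even and odd parts yields the claimed asymptotic. Substituting $M = \frac{\pi}{\sqrt{3}}\sqrt{k}$ then gives $2V(M) \geq \frac{4k}{3}(1+o(1)) \geq k$ for large $k$, so $m_k \leq \frac{\pi}{\sqrt{3}}\sqrt{k}$; since $n_k < m_k$, the same bound applies to $n$. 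From $\ell_k = m_k^2 + n_k^2 \leq 2m_k^2$, we obtain the upper bound $x_k, y_k, \ell_k \leq \frac{2\pi^2}{3}k$. The lower bounds follow symmetrically: substituting $M = (\frac{\pi}{\sqrt{3}} - 1)\sqrt{k}$ gives $2V(M) < k$ for large $k$, hence $m_k > M$ and consequently $\ell_k \geq m_k^2 \in \Omega(k)$.

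The main technical obstacle is controlling the $O(M \log M)$ error term so that the upper bound on $m_k$ holds for \emph{every} $k$, not merely asymptotically. The multiplicative slack of $4/3$ between the leading count $\frac{4M^2}{\pi^2}$ and the required threshold $\frac{3M^2}{\pi^2}$ comfortably absorbs this error once $M$ exceeds a small threshold; the finitely many remaining small cases can be dispatched by direct enumeration of valid $(m,n)$ pairs.
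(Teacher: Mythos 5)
Your proposal is correct and follows the same basic route as the paper: count the generator pairs $(m,n)$ below a threshold via the Farey/totient asymptotic $\sum_{m\le M}\phi(m)=\frac{3}{\pi^2}M^2+\mathcal{O}(M\log M)$ and invert to bound the parameters of the $k$-th triple. The execution differs in the bookkeeping, and yours is the more careful of the two. The paper identifies the triples with generator at most $x$ directly with the Farey fractions of order $x$ and uses $|\mathcal{F}_x|\ge\frac{3}{\pi^2}x^2$ at $x=\frac{\pi}{\sqrt{3}}\sqrt{k}$, silently ignoring both the factor two (each valid pair yields two ordered triples, one per Equation~\eqref{eq:euclid} and~\eqref{eq:euclid2}) and the parity restriction $m\not\equiv n\pmod 2$ required for primitivity; these two omissions happen to push in opposite directions. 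Your parity-split count $2V(M)=\frac{4}{\pi^2}M^2+\mathcal{O}(M\log M)$, justified by the even/odd involution $n\mapsto m-n$, is the correct one and shows explicitly why the paper's constant still works (indeed $2V(M)\ge\sum_{m\le M}\phi(m)$ termwise), and the resulting $4/3$ slack lets you absorb the error term without appealing to the paper's claim that the Farey error $f(x)$ is nonnegative -- at the price of a finite check for small $k$ that you only sketch (and which, done honestly, would also expose that the stated bound $m\le\frac{\pi}{\sqrt 3}\sqrt{k}$ is tight enough to need care at the very smallest $k$). One caveat on coverage, shared with the paper's own proof: you establish the lower bounds only for $m$ and hence for $\ell_k=m^2+n^2$; the displayed $\Omega(k)$ lower bound for $x_k$ and $y_k$ in the ``moreover'' part is not addressed -- nor is it addressed by the paper, and as literally stated it is problematic anyway, since its constant coincides with the upper bound and triples generated from pairs $(m,m-1)$ have $x_k=2m-1=\Theta(\sqrt{k})$. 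So there is no gap relative to what the paper actually proves; your version is essentially the paper's argument with the parity and multiplicity issues repaired.
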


\begin{proof}
    According to Lemma~\ref{lem:primitive_property}, we can obtain two primitive Pythagorean triples for each irreducible fraction $\frac{m}{n}$ and using numerator $m$ and denominator $n$ as input for Equation~\eqref{eq:euclid} and~\eqref{eq:euclid2}, respectively.   
    The sequence (sorted by value) of irreducible fractions $\frac{m}{n}$  where $0\leq m,n\leq x$ is known as the Farey sequence $\mathcal{F}_x$ of order $x$~\cite[Chapter 4.9]{farey}.  It is known that 
    \begin{equation*}
        |\mathcal{F}_x|=\frac{3}{\pi^2}x^2+f(x) \quad \text{where } f(x)=\mathcal{O}(x \log x) \text{ and } f(x)\geq 0.\quad\cite[\text{Chapter 9.3}]{farey}
    \end{equation*}
    In particular, for $x=\frac{\pi}{\sqrt{3}}\sqrt{k}$, we obtain:
    \begin{equation*}
        |\mathcal{F}_x|\geq\frac{3}{\pi^2}\left(\frac{\pi}{\sqrt{3}}\sqrt{k}\right)^2=\frac{3\pi^2}{\pi^23}k=k
    \end{equation*}
    Hence, $m,n \leq \frac{\pi}{\sqrt{3}}\sqrt{k}$ and thus $x_k,y_k,\ell_k\leq \frac{2\pi^2}{3}k$.

    In contrast, if we choose $x=\left(\frac{\pi}{\sqrt{3}}-1\right)\sqrt{k}$, we obtain
    \begin{eqnarray*}
        |\mathcal{F}_x| &= &\frac{3}{\pi^2}\left(\left(\frac{\pi}{\sqrt{3}}-1\right)\sqrt{k}\right)^2 + \mathcal{O}\left(\sqrt{k} \log\left( \sqrt{k}\right)\right)
        \\ &=&\left(\frac{3\pi^2}{\pi^23}-\frac{6\pi}{\sqrt{3}\pi^2}+1\right)\cdot k + \mathcal{O}\left(\sqrt{k} \log\left( \sqrt{k}\right)\right)
        \\ & = & k - \mathcal{O}(k) + \mathcal{O}\left(\sqrt{k} \log\left( \sqrt{k}\right)\right) < k, 
    \end{eqnarray*}
    i.e., we must have $m > \left(\frac{\pi}{\sqrt{3}}-1\right)\cdot \sqrt{k}$ for sufficiently large $k$.
\end{proof}

Given \cref{lem:triple_size}, it is now straight-forward to compute the first $k$ Pythagorean triples:

\begin{lemma}
\label{lem:compute_triples}
The first $k$ Pythagorean triples $\mathcal{P}_k$ can be computed in $\mathcal{O}(k^{3/2})$ time.
\end{lemma}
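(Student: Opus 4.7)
The plan is to enumerate candidate parameter pairs $(m,n)$ directly in the order induced by $\prec_{\text{prim}}$ and emit the corresponding primitive triples one by one, stopping after $k$ of them have been produced. By \cref{lem:triple_size}, both $m$ and $n$ of the $k$-th primitive triple are at most $\frac{\pi}{\sqrt 3}\sqrt k$, so it suffices to iterate over pairs with $1 \le n < m \le \lceil \frac{\pi}{\sqrt 3}\sqrt k\, \rceil$; this is $\mathcal{O}(k)$ candidate pairs.

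Concretely, I would iterate lexicographically over $m = 1,2,\ldots$ and, for each fixed $m$, over $n = 1,2,\ldots,m-1$. This order matches clauses~(i) and~(ii) of the definition of $\prec_{\text{prim}}$; within a fixed $(m,n)$ I would output the triple obtained from \eqref{eq:euclid} first and the triple from \eqref{eq:euclid2} immediately after, which matches clause~(iii). Each emission evaluates \eqref{eq:euclid} or \eqref{eq:euclid2} by direct substitution and thus costs $\mathcal{O}(1)$ time. No sorting is required.

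For each pair I still have to decide whether it actually generates a primitive triple: by the standard analysis of {Euclid}'s formula, this happens exactly when $\gcd(m,n) = 1$ and $m,n$ have opposite parities; in the remaining cases the triple is a proper multiple of a primitive triple that was already emitted for a smaller $(m,n)$, so skipping it is safe. Since $m,n = \mathcal{O}(\sqrt k)$, a naive trial-division computation of $\gcd(m,n)$ costs $\mathcal{O}(\sqrt k)$ per pair, and the parity check is $\mathcal{O}(1)$. Hence the $\mathcal{O}(k)$ candidate-pair tests cost $\mathcal{O}(k \cdot \sqrt k) = \mathcal{O}(k^{3/2})$ in total, which dominates the $\mathcal{O}(k)$ time spent on actually evaluating the surviving triples.

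The only mildly subtle step, and therefore the main obstacle, is justifying that this greedy iteration genuinely enumerates the primitive triples in the order $\prec_{\text{prim}}$. This follows from two observations: first, skipping pairs that fail the primitivity test does not reorder the remaining primitive emissions relative to one another, because $\prec_{\text{prim}}$ is defined using exactly the same $(m,n)$ ordering; and second, by \cref{lem:triple_size} every one of the first $k$ primitive triples corresponds to some pair inside the candidate range, so the algorithm is guaranteed to reach its stopping point and output precisely $\mathcal{P}_k$.
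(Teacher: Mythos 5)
Your proposal is correct and follows essentially the same route as the paper: bound $m,n\in\mathcal{O}(\sqrt{k})$ via \cref{lem:triple_size}, enumerate the $\mathcal{O}(k)$ candidate pairs, and spend $\mathcal{O}(\sqrt{k})$ per pair on a naive primitivity test, giving $\mathcal{O}(k^{3/2})$ overall. The only cosmetic difference is that you test primitivity via $\gcd(m,n)=1$ and opposite parity (and argue the emission order explicitly), whereas the paper checks the generated triple for common divisors and leaves the ordering implicit.
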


\begin{proof}
    According to \cref{lem:triple_size}, there is a value $x=\mathcal{O}\left(\sqrt{k}\right)$, so that each $P \in \mathcal{P}_k$ can be generated by inserting values $m$ and $n$ into Equation~\eqref{eq:euclid} or~\eqref{eq:euclid2} where $m,n \leq x$.  Thus, we can iterate over all $x^2$ possible combinations of values for $m$ and $n$ and check for each generated Pythagorean triple if it is primitive. The primitivity check can be trivially implemented in $\mathcal{O}\left(\sqrt{k}\right)$ time by testing all possible common divisors.
\end{proof}

In our algorithms, the primitive Pythagorean triples are used to decide how to draw edges.
Namely, having a primitive Pythogorean triple~$(x,y,\ell)$,
the two endpoints of an edge $e$ will be $x$ units horizontally and $y$ units vertically apart whereas $e$ has length $\ell$.
To maintain planarity, it will be important to have a sorting of the edges by slope.
The \emph{angle} $\alpha(P)$ of a Pythagorean triple $P=(x,y,\ell)$ is equal to $\arctan(\frac{y}{x})$, i.e., it is the angle in the right-angled triangle spanned by edges of lengths $x$, $y$, and $\ell$ that occurs between the edges of lengths $x$ and $\ell$.
For Pythagorean triples $P$ and $P'$, we say that $P \prec_\text{angle} P'$ if and only if $\alpha(P)<\alpha(P')$.
Observe that, due to the definition of the $\arctan$,
all angles that correspond to primitive Pythagorean triples of~$\mathcal{P}$ are distinct
and, hence, $\prec_\text{angle}$ defines a total ordering of $\mathcal{P}$.
In the following, we denote by $\mathcal{P}_k^\circ=(P_1 ^\circ,\ldots,P_k^\circ)$ the \emph{angle-sorted} first $k$ primitive Pythagorean triples, that is, the permutation of $\mathcal{P}_k$ so that for $P_i^\circ, P_j ^\circ \in \mathcal{P}_k^\circ$ we have $P_i^\circ \prec_\text{angle} P_j^\circ$.

\section{Polynomial-Area Truly Integral F\'ary Drawings for Trees and Cacti}

We show that truly integral F\'ary drawings on a polynomial-size grid exist for cacti.
For the convenience of the reader, we do not show our main result for cacti directly,
but start with some special cases.
The result on binary trees is a consequence of the work by Biedl~\cite{biedl2011drawing},
for stars, trees, and cacti we show in three steps how to find and assign suitable primitive Pythagorean triples
for drawing such graphs and what size of a grid is used for the drawings.

\subsection{Binary Trees}

We begin by considering the family of binary trees
or, equivalently, subcubic trees.
\todo{R1: Change section to subcubic trees.}
For this family of trees, we show that it is always possible to compute a quadratic area truly integral F\'ary embedding.
The following statement can be seen as a corollary of Biedl's~\cite{biedl2011drawing} result for truly integral F\'ary drawings of $3$-connected $3$-regular graph.
We essentially, make all non-leaf vertices have degree~3 and then construct the Halin graph being $3$-connected and $3$-regular.

\begin{theorem}\label{thm:binary}
    Let $T = (V, E)$ be a binary tree with $n$ vertices.
    There is a truly integral F\'ary embedding on
    a grid of size $\mathcal{O}(n) \times \mathcal{O}(n)$,
    which can be found in $\mathcal{O}(n)$ time.
\end{theorem}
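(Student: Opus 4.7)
The plan is to follow the hint of the authors and reduce the problem to Biedl's truly integral F\'ary drawing algorithm for $3$-connected $3$-regular planar graphs by realizing $T$ as a subgraph of a Halin graph.

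First, I would preprocess $T$ into a tree $T'$ in which every non-leaf vertex has degree exactly $3$. The only obstruction is internal vertices of degree~$2$: for each such vertex~$v$, I attach a single new dummy leaf; if the root happens to have degree~$1$, I similarly add a dummy child. Since every new vertex is a leaf hanging off a former degree-$2$ (or degree-$1$) vertex, we have $|V(T')|\leq 2|V(T)|$, so $T'$ still has $\mathcal{O}(n)$ vertices and can be built in $\mathcal{O}(n)$ time. We may assume $n$ is large enough that $T'$ has at least three leaves; the tiny remaining cases (a single vertex, a single edge, a path on three vertices) admit a trivial truly integral F\'ary drawing on an $\mathcal{O}(1)\times\mathcal{O}(1)$ grid.

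Next, I would fix an arbitrary planar embedding of $T'$ and form the Halin graph $H$ by adding a cycle $C$ through the leaves of $T'$ in the cyclic order induced by the embedding. By construction, $H$ is planar; each leaf of $T'$ has degree $1+2=3$ in $H$, and each internal vertex of $T'$ keeps degree $3$, so $H$ is $3$-regular. Moreover, Halin graphs are well known to be $3$-connected. Hence Biedl's theorem~\cite{biedl2011drawing} applies to $H$, producing in $\mathcal{O}(n)$ time a truly integral F\'ary embedding of $H$ on an $\mathcal{O}(n)\times\mathcal{O}(n)$ grid.

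Finally, I would delete the cycle edges of $C$ and all dummy leaves from that drawing. The remaining picture is a straight-line drawing of $T$ in which every vertex still sits on an integer grid point, every surviving edge still has integer length (these properties are inherited pointwise), and crossing-freeness is preserved because deleting edges and vertices cannot introduce crossings. The bounding box is a subset of the original one, so the size remains $\mathcal{O}(n)\times\mathcal{O}(n)$, and the whole procedure runs in $\mathcal{O}(n)$ time.

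The main obstacle I foresee is not algorithmic but conceptual bookkeeping: one has to justify carefully that the Halin graph built on $T'$ really satisfies the hypotheses of Biedl's algorithm (planarity is inherited from the chosen embedding, $3$-regularity follows from the degree-$3$ internal vertices of $T'$ together with the two cycle-edges per leaf, and $3$-connectivity is the classical Halin property), and that the preprocessing does not blow up the vertex count by more than a constant factor so that the $\mathcal{O}(n)\times\mathcal{O}(n)$ bound transfers back to $T$. Everything else is a straightforward application of known results.
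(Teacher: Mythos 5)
Your proposal is correct and follows essentially the same route as the paper: augment $T$ to a subcubic tree with all internal vertices of degree three, add a cycle through the leaves to obtain a $3$-connected $3$-regular Halin graph, apply Biedl's algorithm, and read off the drawing of $T$ as a subgraph. The only nitpick is your treatment of a degree-$1$ root: adding a single dummy child leaves it at degree two (the paper simply ignores the root designation and treats such a vertex as a leaf), but this is trivial to repair and does not affect the argument.
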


\begin{proof}
    If $T$ is rooted and directed, we ignore the root and consider the underlying undirected graph.
    Augment $T$ by adding leaves such that every non-leaf vertex has degree three.
    Call the resulting subcubic tree $T'$.
    Note that $T'$ has fewer than twice as many vertices as $T$.
    
    Fix any plane embedding of $T'$ and insert a crossing-free cycle~$C$ through all its leaves yielding graph $G_T$,
    which is a so-called Halin graph.
    $G_T$ has as many vertices as $T'$ and is a $3$-regular graph.
    Clearly, $G_T$ is 3-connected: for every pair of vertices~$(u, v)$, there is
    a unique simple path~$P_1$ between~$u$ and~$v$ only using edges of~$T'$ and
    two simple disjoint paths~$P_2$ and~$P_3$ connecting $u$ and $v$ via~$C$ and not using vertices of~$P_1$,
    where $P_2$ traverses~$C$ in clockwise and $P_3$ traverses~$C$ in counterclockwise order.
    We now apply a result by Biedl~\cite{biedl2011drawing} that states that any 3-connected 3-regular planar graph $G$ -- in particular, $G_T$ --  admits a truly integral F\'ary embedding in $5(N-2)\times24(N-2)$ where $N$ is the number of vertices in $G$. Since $N \in \mathcal{O}(n)$ in $G_T$ and $T$ is a subgraph of $G_T$, the statement follows.
\end{proof}

\subsection{Stars}

In the remainder of this section, in a three-step approach, we apply the knowledge about primitive Pythagorean triples first to stars,
then to arbitrary trees, and finally to cacti.
While in contrast to binary trees, general stars have arbitrary maximum degree, we are again able to yield quadratic area truly integral F\'ary embeddings. Here, we begin with stars, for which we use the primitive Pythagorean triples.

\begin{theorem} \label{thm:stars}
    Let $T = (V, E)$ be a star with $n$ vertices.
    There is a truly integral F\'ary~embedding on
    a grid of size $\left(\frac{\pi^2}{3} n + \oh(1)\right) \times \left(\frac{\pi^2}{3} n + \oh(1)\right) \approx 3.29 n \times 3.29 n$,
    which can be found in $\mathcal{O}(n^{3/2})$ time.
    Any truly integral F\'ary embedding of~$T$ needs
    a grid of size $\Omega(n) \times \Omega(n)$.
\end{theorem}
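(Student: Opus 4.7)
The plan is to place the center of the star at an interior grid point and to assign each of the $n - 1$ leaves to a distinct ray that yields an integer edge length. Integer-length rays from an integer point come in two flavors: (i)~the four cardinal directions $(\pm 1, 0), (0, \pm 1)$, each of trivially integer length (length $1$ suffices); and (ii)~for every primitive Pythagorean triple $P = (x, y, \ell)$, the four sign-reflected directions $(\pm x, \pm y)$, each of integer length~$\ell$. Taking $k = \lceil (n-5)/4 \rceil$ and reading off the triples in $\mathcal{P}_k$ provides $4k + 4 \geq n - 1$ rays of pairwise distinct slopes (distinct as members of~$\mathcal{P}_k$, plus cardinal directions which are distinct from every Pythagorean one). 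Because all edges emanate from the common center along pairwise distinct slopes, the resulting straight-line drawing is automatically crossing-free, with integer coordinates and integer edge lengths.

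By \cref{lem:triple_size}, every component of every $P \in \mathcal{P}_k$ is at most $\frac{2\pi^{2}}{3} k \leq \frac{\pi^{2}}{6} n + \oh(1)$. Hence, placing the center at $\bigl(\lceil\frac{\pi^{2}}{6} n\rceil, \lceil\frac{\pi^{2}}{6} n\rceil\bigr)$ fits every leaf into the non-negative quadrant and yields a bounding square of side $\frac{\pi^{2}}{3} n + \oh(1) \approx 3.29\, n$. \cref{lem:compute_triples} computes the required triples in $\oh(k^{3/2}) = \oh(n^{3/2})$ time, and the subsequent placement step is linear, so the overall runtime is $\oh(n^{3/2})$.

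For the matching lower bound I would argue that any two leaves must lie on distinct rays through the center, because otherwise one of the two straight-line edges would strictly contain the other and the drawing would fail to be straight-line planar. A ray from an integer center that passes through another integer point and has integer length is determined by a primitive direction $(a, b)$ with $\gcd(|a|, |b|) = 1$ and $a^{2} + b^{2}$ a perfect square -- that is, a cardinal direction or (after reflection) a primitive Pythagorean pair; in particular, the primitive representative $(a, b)$ must itself fit inside the bounding box, because it is an integer point on the ray closer to the center than the leaf. If the drawing fits in a bounding square of side~$s$, each such direction therefore satisfies $|a|, |b| \leq s$, and \cref{lem:triple_size} bounds the number of primitive Pythagorean triples with components at most~$s$ by $\oh(s)$. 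Accommodating $n - 1$ distinct rays thus forces $s = \Omega(n)$.

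The main technical point is merely to pin down the constant $\frac{\pi^{2}}{3}$, which is handed to us by the quantitative version of \cref{lem:triple_size}; planarity and integrality follow for free from using pairwise distinct Pythagorean slopes about a single apex, so no combinatorial difficulty remains once the supply of primitive triples is in place.
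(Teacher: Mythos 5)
Your upper-bound construction is essentially the paper's: place the centre at a lattice point, take the first $\lceil(n-1)/4\rceil$ angle-sorted primitive triples from $\p^\circ$, and reuse them in all four quadrants via sign reflections, which gives the same side length $\frac{\pi^2}{3}n+\oh(1)$ and, with \cref{lem:compute_triples}, the same $\oh(n^{3/2})$ running time. Adding the four cardinal directions and translating the centre into the positive quadrant are cosmetic differences. One phrasing nit: edges in opposite quadrants have equal slopes, so planarity should be argued via pairwise distinct \emph{rays} from the apex (as you do when you first introduce the rays), not distinct slopes; the conclusion is unaffected.

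The lower bound is where you genuinely depart from the paper, and there is a gap relative to the stated claim. You correctly observe that distinct leaves need distinct rays, that each admissible ray has a primitive integer direction $(a,b)$ with $a^2+b^2$ a perfect square, and that this primitive point lies inside the bounding box; counting such directions inside an $s\times s$ box via \cref{lem:triple_size} then forces $s\in\Omega(n)$. But $s$ here is the side of a bounding \emph{square}, i.e.\ the larger of the two grid dimensions, so your argument only rules out an $o(n)\times o(n)$ drawing. The theorem, as the paper proves it, asserts that \emph{both} dimensions must be $\Omega(n)$: the paper's argument works with the generator parameters $M>N$ and concludes that some triple which must be used has both legs $2MN$ and $M^2-N^2$ in $\Omega(n)$, which pins down width and height simultaneously. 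Your counting cannot be upgraded for free: the number of primitive Pythagorean directions whose \emph{smaller} coordinate is at most $w$ is on the order of $w\log w$ (a leg value $a\le w$ occurs in roughly $2^{\omega(a)-1}$ primitive triples), so the same argument gives only $\Omega(n/\log n)$ for the smaller dimension and does not exclude, say, a drawing on an $o(n)\times\mathrm{poly}(n)$ grid. To match the claim you would need an additional argument controlling both legs of the triples that are forced to appear, in the spirit of the paper's generator-based reasoning.
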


\begin{proof}
    We describe how to draw $T$ such that every vertex is on a grid point
    and every edge has integer length.
    The realization as a linear-time algorithm is straightforward
    \begin{figure}
        \centering
        \includegraphics[scale=1]{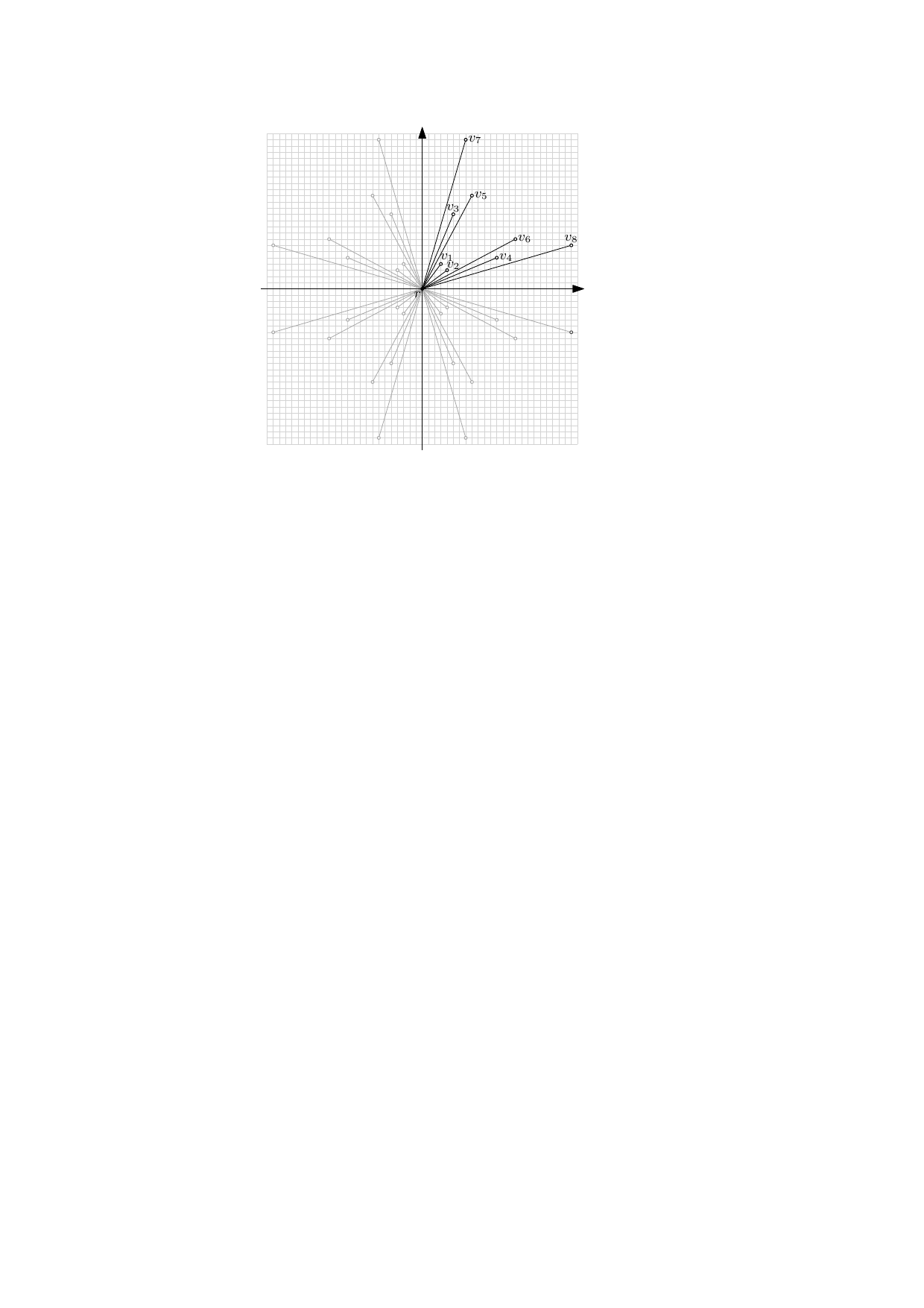}
        \caption{Drawing of a star $T$ with $32$ leaves created with our algorithm in the proof of \cref{thm:stars}.}
        \label{fig:star}
    \end{figure}
    except for determining $\oh(n)$ Pythagorean triples, which can be done
    in $\oh(n^{3/2})$ time according to \cref{lem:compute_triples}.
    Place the central vertex~$r$ at $(0, 0)$.
    Label the leaves (arbitrarily or in a given rotation order) $v_1, v_2, \dots, v_n$.
    For $i \in \{1,\ldots,\lceil (n-1) / 4 \rceil\}$,
    place $v_i$ onto $(x_i, y_i)$ where $(x_i, y_i, \ell_i)$
    is the $i$-th primitive Pythagorean triple from~$\p^\circ$.
    Clearly, each edge $rv_i$ has an integral length $\ell_i$
    and no pair of edges overlaps.
    So far, we have drawn at least one fourth of the edges in the top right quadrant.
    Similarly, repeat this process for the other three fourths of edges~-- for each of them in one of the three remaining quadrants
    with the same set of primitive Pythagorean triples; see \cref{fig:star}.

    The resulting drawing is a truly integral F\'ary embedding.
    We next analyze the grid size.
    By \cref{lem:triple_size}, $x_k,y_k \leq 2\pi^2/3k$.
    Since $\lceil (n-1) / 4 \rceil \le (n+2)/4$,
    the maximum x- and y-coordinate is at most $2\pi^2 (n+2)/12$ and
    the minimum x- and y-coordinate is at least $-2\pi^2 (n+2)/12$.
    This yields a grid of size at most
    \begin{equation*}
        \left(2 \cdot \frac{2\pi^2 (n+2)}{12} + 1\right) \times \left(2 \cdot \frac{2\pi^2 (n+2)}{12} + 1\right) =
        \frac{\pi^2 (n+2) + 3}{3} \times \frac{\pi^2 (n+2) + 3}{3}.
    \end{equation*}
    
    Finally, let us argue about the lower bound.
    Any drawing of a star requires \mbox{$(n-1) / 4$} distinct primitive Pythagorean triples.
    By \cref{cor:allprimitive}, we know that, the generator formulas~(\ref{eq:euclid}) and~(\ref{eq:euclid2}) generate all primitive Pythagorean triples.
    Let $M$ and $N$ be the two numbers of these generator formulas with $M > N$.
    By \cref{lem:triple_size}, to generate $\Theta(n)$ primitive Pythagorean triples, some $M \in \Omega(\sqrt{n})$.
    We need $\Theta(n)$ distinct pairs of $M$ and~$N$.
    Hence, if some $M \in \Theta(f(n))$ for some function $f(n) \in \Omega(\sqrt{n})$, then there is a corresponding $N \in \Omega(n / f(n))$.
    In the generator formulas, this yield $2MN \in \Omega(n)$ and $M^2 - N^2 \in \Theta(f^2(n)) \subseteq \Omega(\sqrt{n}^2) = \Omega(n)$.
    This corresponds to a grid of size $\Omega(n) \times \Omega(n)$.
\end{proof}
We remark that the star drawings are not only asymptotically optimal,
but also optimal with respect to the constant factors since we use the smallest primitive Pythagorean triples.
No other approach is possible besides using larger, suboptimal Pythagorean triples.
\todo{JZ: we do not yet use the horizontal and vertical slopes.
We must use them to make our statement correct; change it in a later version to also use them.}
\todo{JZ: in the next version, re-name the variables such that $\ell$ becomes the number of leaves.}

\subsection{Arbitrary Trees}
\label{sec:trees}

Next, we essentially generalize our algorithm for stars to arbitrary trees
at the cost of a factor~$d$ where $d$ is the depth of the tree.

Let $T = (V, E)$ be a (rooted) tree, which has $n$ vertices, $t$ leaves, and depth~$d$. In the following, we describe an algorithm that computes a truly integral F\'ary embedding on a grid of size $\frac{2\pi^2}{3} td \times \frac{2\pi^2}{3} td \subseteq \oh(n^2) \times \oh(n^2)$ in $\oh(n + t^{3/2})$ time.

    \subparagraph{Notation for general trees.} Denote the root of $T$ by~$r$.
    If no root is given, pick~$r$ arbitrarily, e.g.,
    by choosing the vertex with greatest graph centrality.
    Assume that we are given an order of the children at each vertex, i.e., a rotation system.
    If no rotation system is given, choose it arbitrarily.
    For a vertex $v \in V$, we denote by $T_v$ the subtree of $T$ rooted at vertex~$v$, by
    $d(T_v)$ the depth of~$T_v$ and by~$\leaves(T_v)$
    the number of leaves in $T_v$;
    e.g., $T_r=T$, $d(T_r) = d$, $\leaves(T_r) = t$, and $T_v=(\{v\},\emptyset)$, $d(T_v) = 0$ and $\leaves(T_v) = 1$ if $v$ is a leaf.
    Next, we describe how to generate a truly integral F\'ary embedding of~$T$.
    The realization as a linear-time algorithm is straight-forward, however, we add $\oh(t^{3/2})$ additional running time for computing $\oh(t)$ primitive Pythagorean triples; see \cref{lem:compute_triples}.
    
    \subparagraph{Algorithm.} Similar to stars, we will place~$r$ at $(0, 0)$ and
    we will use $t$ primitive Pythagorean triples for drawing $T$.
    Here,  for simplicity 
    we only use the first quadrant. 
    Assign the first child~$v_1$ of~$r$ the first $\leaves(T_{v_1})$
    primitive Pythagorean triples from~$\p^\circ_{t}$,
    assign the second child~$v_2$ of~$r$ the next $\leaves(T_{v_2})$
    primitive Pythagorean triples from~$\p^\circ_{t}$, etc.
    Draw each child~$v$ of~$r$ at the coordinates $(x, y)$
    where $(x, y, \ell)$ is the first primitive Pythagorean triple among $\p(v)$,
    where $\p(v)$ is the set of primitive Pythagorean triples assigned to~$v$, sorted as in~$\p^\circ_{t}$.
    For each child~$v$ of~$r$, for which~$v$ is not a leaf of~$T$,
    we draw the subtree~$T_v$ recursively:
    in the recursive call, we assume that the coordinate system is centered at the position of~$v$ and we use exactly the primitive Pythagorean triples assigned to~$v$; see \cref{fig:tree}.
    \begin{figure}
        \centering
        \includegraphics[scale=1.25]{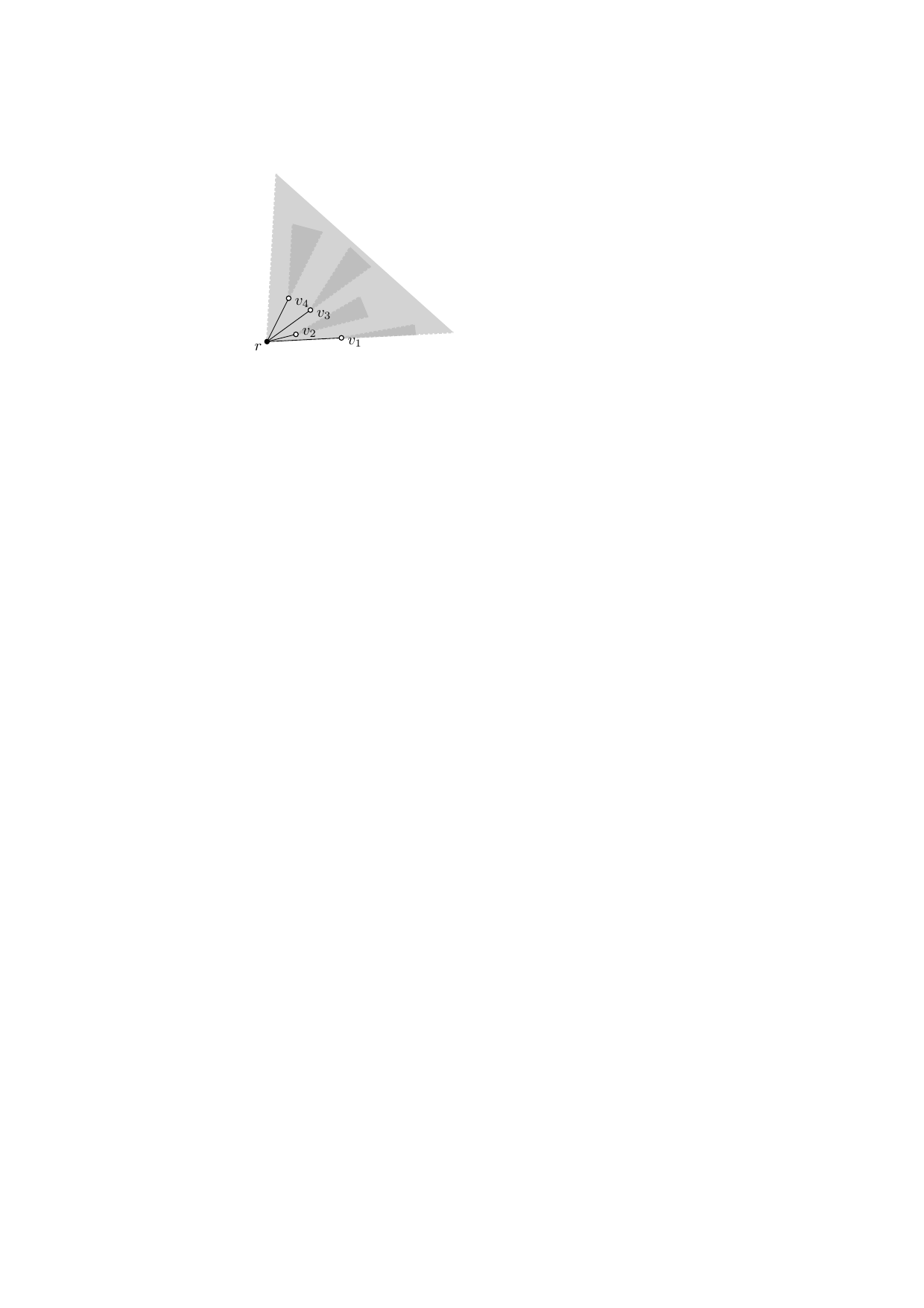}
        \caption{Schematization of our recursive drawing algorithm in \cref{sec:trees}: Vertex $r$ is placed at $(0,0)$ and its subtree is drawn inside the light-gray cone according to \cref{inv:cone}. Each child $v_i$ is assigned an exclusive set of slopes (defined by $\mathcal{P}(v_i)$) resulting in the non-overlapping gray cones centered at the child components. The edge between $r$ and $v_i$ is drawn using the first triple assigned to $v_i$.}
        \label{fig:tree}
    \end{figure}
    
    \subparagraph{Correctness.} We next show that for every vertex~$v$ of~$T$, the following invariants hold:
    \begin{enumerate}
        \item\label{inv:plane} The drawing of~$T_v$ is a truly integral F\'ary embedding.
        \item\label{inv:cone} The drawing of $T_v$ lies in a cone
        that is rooted at~$v$ and bounded by two rays emanating at~$v$
        and having slopes $y_1/x_1$ and $y_{\leaves(T_v)} / x_{\leaves(T_v)}$,
        where the triples $(x_1, y_1, \ell_1)$ and $(x_{\leaves(T_v)}, y_{\leaves(T_v)}, \ell_{\leaves(T_v)})$
        denote the first and the last Pythagorean triple of~$\p(v)$, respectively. See also the gray cones in \cref{fig:tree} for an illustration.
        \item\label{inv:grid} In the drawing of~$T_v$, the Euclidean distance between
        $v$ and any vertex of~$T_v$ is at most $d(T_v) \cdot \frac{2 \pi^2}{3}\cdot  t$.
    \end{enumerate}
    
    \subparagraph{Base Case: A leaf.} We start with the base case that $v$ is a leaf.
    The drawing of~$T_v$ contains only $v$ being placed at~$(0, 0)$.
    Trivially, \cref{inv:plane,inv:cone,inv:grid} hold.
    
    \subparagraph{Recursion: Combining child components of a vertex $u$.} We consider the combination step where $u$ is the root
    of~$T_u$ and $u$ has children $v_1, \dots, v_k$ for some $k \in \{1,\ldots,n-1\}$.
    We assume that $T_{v_1}, \dots, T_{v_k}$
    are drawn according to \cref{inv:plane,inv:cone,inv:grid}.
    We place $u$ at $(0, 0)$ and, for each $i \in \{1,\ldots,k\}$,
    we place the drawing of~$T_{v_i}$ such that $v_i$ lies on $(x, y)$
    where $(x, y, \ell)$ is the first primitive Pythagorean triple of~$\p(v_i)$.
    This is a translation of~$T_{v_i}$ from a grid point to another grid point; see also the child components of $r$ in \cref{fig:tree}.
    All vertices of~$T_{v_i}$ as well as $u$ lie on grid points
    and all edges of~$T_u$ have integer length.
    Clearly, there is no crossing among the edges $uv_1, \dots, uv_k$.
    Suppose for a contradiction that for some $i, j \in \{1,\ldots,k\}$, $i < j$,
    the drawings of $T_{v_i}$ and $T_{v_j}$ intersect.
    By \cref{inv:cone}, this implies that the cones of $T_{v_i}$ and $T_{v_j}$ intersect.
    Since $i < j$, $uv_i$ has a smaller slope than~$uv_j$.
    For the cone based in~$v_i$ to intersect the cone based in~$v_j$,
    the upper bounding ray of $v_i$'s cone must have a greater slope than the lower bounding ray of $v_j$'s cone.
    However, all Pythagorean triples contained in $\p(v_i)$ correspond to smaller slopes
    then the Pythagorean triples contained in $\p(v_j)$ since they are sorted by angle.
    A similar argument can be made to show that none of the edges $uv_1, \dots, uv_k$ crosses $T_{v_1}, \dots, T_{v_k}$.
    Hence, the combined drawing is planar and \cref{inv:plane} is satisfied.
    Furthermore, consider the rays emanating from~$u$
    with the slopes corresponding the first and the last Pythagorean triple of~$\p(u)$.
    They contain all edges $uv_1, \dots, uv_k$ and the cones based at $v_1, \dots, v_k$.
    Therefore, \cref{inv:cone} is satisfied.
    Finally, the maximum Euclidean distances between $u$ and any of $v_1, \dots, v_k$ is~$\frac{2 \pi^2}{3}t$ by \cref{lem:triple_size}.
    Together with \cref{inv:grid} and the triangle inequality,
    this implies that the Euclidean distance between $u$ and any vertex~$x$ in $T_{v_1}, \dots, T_{v_k}$ is
    \begin{equation*}
        \mathrm{dist}(u, x) \le \frac{2 \pi^2}{3} t + \max_{i = 1}^k d(T_{v_i}) \cdot \frac{2 \pi^2}{3} t = \left(\max_{i = 1}^k d(T_{v_i}) + 1\right) \frac{2 \pi^2}{3} t = d(T_u) \frac{2 \pi^2}{3} t.
    \end{equation*}
    This satisfies \cref{inv:grid}.
    
\noindent    \cref{inv:plane,inv:cone,inv:grid} for the root~$r$ imply the following theorem.
\todo{R4: Can we improve if we connect the largest child with a horizontal/vertical edge?}
\begin{theorem} \label{thm:trees}
    Let $T = (V, E)$ be a (rooted) tree, which has $n$ vertices, $t$ leaves,
    and depth~$d$.
    There is a truly integral F\'ary embedding on a grid of size $\frac{2\pi^2}{3} td \times \frac{2\pi^2}{3} td \subseteq \oh(n^2) \times \oh(n^2)$,
    which can be found in $\oh(n + t^{3/2})$ time.
\end{theorem}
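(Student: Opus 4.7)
The plan is straightforward given the machinery developed just above the theorem statement: apply \cref{inv:plane,inv:cone,inv:grid} to the root $r$ of $T$, for which $T_r = T$, $d(T_r) = d$, and $\leaves(T_r) = t$. \cref{inv:plane} immediately yields that the resulting drawing is a truly integral F\'ary embedding, so the only remaining tasks are to translate the Euclidean-distance guarantee of \cref{inv:grid} into a grid-size bound and to analyze the running time.

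For the grid size, I would observe that because the algorithm uses exclusively primitive Pythagorean triples from the first quadrant~-- i.e., those with nonnegative $x$- and $y$-components~-- every vertex of $T$ is placed at a grid point whose coordinates lie in the first quadrant relative to $r = (0,0)$. By \cref{inv:grid}, every such vertex has Euclidean distance at most $d \cdot \frac{2\pi^2}{3}\, t$ from $r$, which bounds both coordinate components individually. Hence the bounding box of the drawing fits inside a $\frac{2\pi^2}{3} td \times \frac{2\pi^2}{3} td$ grid, and since $t, d \leq n$ this is contained in $\oh(n^2) \times \oh(n^2)$.

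For the running time, the dominant up-front cost is precomputing the first $t$ primitive Pythagorean triples together with their angle-sorted order $\p^\circ_{t}$, which by \cref{lem:compute_triples} takes $\oh(t^{3/2})$ time (sorting $\oh(t)$ triples by angle is subsumed). After that, a single postorder traversal computes $\leaves(T_v)$ for every $v$ in $\oh(n)$ time, and a second top-down traversal simultaneously assigns each vertex~$v$ a contiguous block of $\leaves(T_v)$ triples from $\p^\circ_{t}$, distributes these disjointly among $v$'s children in the prescribed order, and computes each child's grid position in $\oh(1)$ additional work using the parent's position and the first triple of the child's block. This yields a total running time of $\oh(n + t^{3/2})$.

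The main obstacle is not really a difficulty but a bookkeeping point: one must verify that the angle-sorted, disjoint block of triples assigned to each child is both sufficient (its size equals the number of leaves in its subtree, matching the recursion's budget) and compatible with the cone argument used to prove \cref{inv:plane} in the recursive step. Both follow directly from the construction and from the fact that \cref{inv:cone} is phrased in terms of the first and last triple in each child's block, so the final assembly at $r$ is immediate.
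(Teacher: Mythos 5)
Your proposal is correct and follows essentially the same route as the paper: the theorem is obtained by instantiating \cref{inv:plane,inv:cone,inv:grid} at the root (using that all placements stay in the first quadrant, so the Euclidean bound of \cref{inv:grid} directly caps both coordinates by $\frac{2\pi^2}{3}td$), combined with the $\oh(t^{3/2})$ triple-generation cost from \cref{lem:compute_triples} plus linear-time tree traversals. The bookkeeping point you flag about disjoint, angle-sorted triple blocks is exactly what the paper's recursive combination step already establishes, so nothing further is needed.
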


\begin{figure}
        \centering
        \includegraphics[width=\textwidth,page=6]{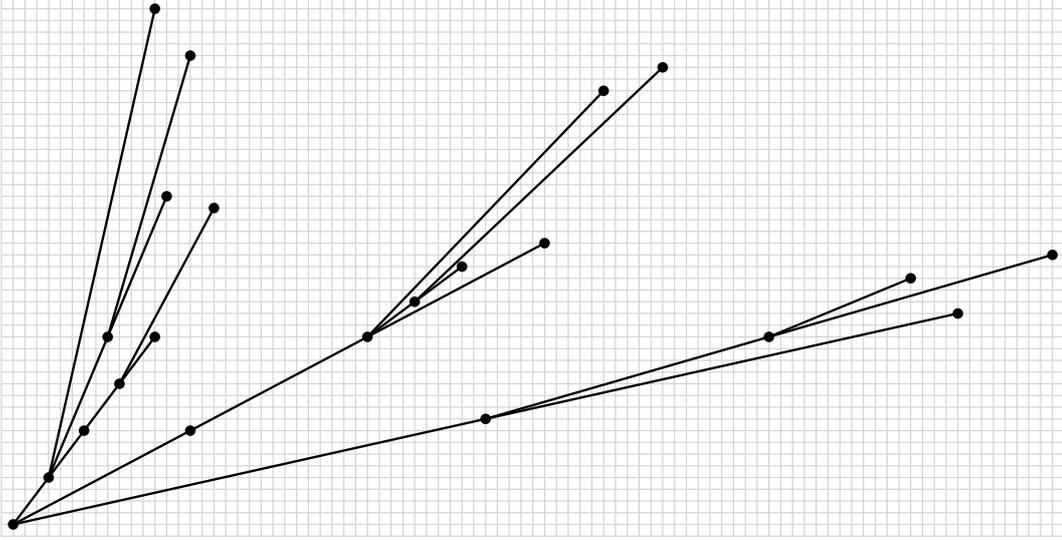}
        \caption{Example output of our algorithm for general trees.}
        \label{fig:tree-example}
    \end{figure}
    
See \cref{fig:tree-example} for an example output of our algorithm.
We remark that by using all quadrants (and not just the first one) similar to \cref{thm:stars},
we can reduce the grid size even to~$(\pi^2 td / 3 + \oh(1)) \times (\pi^2 td /3 + \oh(1)) \approx 3.29 td \times 3.29 td$.
\emph{Balanced trees} are those trees that have depth $\oh(\log n)$.
They are relevant for many practical applications.
By \cref{thm:trees}, they can be drawn on a grid of size $\oh(n \log n) \times \oh(n \log n)$.
\todo{R1:  the mentioning of balanced trees seems superfluous, and slightly inaccurate: is a star graph a balanced tree, even if it has constant diameter?}

\subsection{Cacti}
\label{sec:cacti}
We extend the algorithm algorithm from \cref{sec:trees} to work for cacti. A \emph{cactus graph} is a $1$-vertex-connected graph such that each $2$-vertex-connected component is either a cycle or a single edge; i.e., trees are cacti where every $2$-vertex-connected component is a single edge.
\todo{R2: Check if we can anticipate the intuition of new concepts better}

\subparagraph{Notation for cacti.} Let $C=(V,E)$ denote a cactus with diameter $d$, $t$ leaves, $o$ cycles and $\delta$ $3$-cycles. Further, let $r\in V$ be any vertex of $C$ called the \emph{root}, e.g., the one with greatest graph centrality. In the following, we assume that $C$ is associated with an outerplane rotation system and that each edge $(u,v)$ is oriented so that $u$ is closer to $r$ than $v$, or they are equally close to~$r$.
Note that the orientation can be computed using a BFS traversal rooted at $r$. While in trees, we have constructed drawings for the subgraphs rooted at \emph{each} vertex, here, we decompose the graph at \emph{cut-vertices}, i.e., vertices whose removal separates~$C$. As in trees we call a vertex $v$ a \emph{leaf} if it is a degree-$1$ vertex without successors.
For a cycle $c$ in $C$, we call the unique vertex~$s_c$ on $c$ with no predecessor on $c$ the \emph{origin} of cycle~$c$.
In addition, we call the unique vertex~$t_c$ on $c$ with no successor on $c$ the \emph{terminal} of cycle $c$.
The edges on $c$ can be separated in two directed paths, the \emph{left path} $p_{\leftarrow}(c)$ and the \emph{right path} $p_{\rightarrow}(c)$ where the first edge of the left path precedes the first edge of the right path in a clockwise sorting of the edges incident to $s_c$.
We assume, w.l.o.g., that $|p_{\leftarrow}(c)| \in \{|p_{\rightarrow}(c)|, |p_{\rightarrow}(c)|+1\}$, i.e., if the left and right paths differ in length, then the left path is longer.
We say that a cycle $c$ (or an edge $e$) is a \emph{child} of a vertex $v$ if $v$ is the origin of $c$ (or source of $e$, respectively). 

For a cut-vertex, the root, or a leaf $v$  denote by $C_v$ the subcactus of $C$ rooted at~$v$, i.e., the subgraph induced by all vertices reachable from $v$ via directed edges. Moreover, we denote by $d(C_v)$ the diameter of $C_v$, by $L(C_v)$, the number of leaves in $C_v$, by $O(C_v)$ the number of cycles in $C_v$. Moreover, we denote by $\Delta(C_v)$ the number of $3$-cycles in $C_v$.
For example, if $v$ is the root~$r$, then $C_{r}=C$, $d(C_{r}) = d$, $\leaves(C_{r}) = t$, $O(C_{r})=o$, $\Delta(C_{r})=\delta$. If~$v$ is a leaf, then $C_v=(\{v\},\emptyset)$, $d(C_v) = 0$ and $\leaves(C_v) = 1$, $O(C_v)=\Delta(C_v)=0$.
If $v$ is the origin of a single cycle $c$ such that no vertex on $c$ except $v$ has a child,
then $C_v=(V[c],E[c])$, $d(C_v)=\lfloor V[c]/2\rfloor$, $\leaves(C_v)=0$, $O(C_v)=1$, and $\Delta(C_v)=1$ if $c$ is a $3$-cycle or $\Delta(C_v)=0$ otherwise.
We define the same parameters for cycles: For a cycle $c$ with origin $v$, let $C_c$ denote the subgraph of $C$ induced by $V[c]$ and all vertices reachable via a directed path that starts at a vertex on $c$ other than $v$.
Again, we denote by $d(C_c)$ the diameter of $C_c$, by $L(C_c)$, the number of leaves in $C_c$, by $O(C_c)$ the number of cycles in $C_c$ and  by $\Delta(C_c)$ the number of $3$-cycles in $C_c$.
Moreover, we denote by $L_{\leftarrow}(C_c)$, $O_{\leftarrow}(C_c)$, and $\Delta_{\leftarrow}(C_c)$ the number of leaves, cycles, and $3$-cycles, resp., in the subcacti rooted at internal vertices along $p_{\leftarrow}(c)$ or $t_c$.
Similarly, we denote by $L_\rightarrow(C_c)$, $O_\rightarrow(C_c)$, and $\Delta_\rightarrow(C_c)$  the number of leaves, cycles, and $3$-cycles, resp., in the subcacti rooted at internal vertices along $p_\rightarrow(c)$ (excluding $t_c$).
\todo{R1: The reviewer does not like the variables $O$ and $o$. We might consider changing it.}

\subparagraph{Algorithm.} Similar to the previous two algorithms,
the following algorithm straight-forwardly admits a linear-time implementation except for the step of computing the primitive Pythagorean triples.
Again, we place $r$ at $(0,0)$ and we use at most $t+2o$ primitive Pythagorean triples for drawing $T$.
As in the algorithm for arbitrary trees, we restrict ourselves to the first quadrant.
Let $\mathcal{C}_r=c_1,\ldots,c_i$ denote the child cycles with origin $r$ and let $\mathcal{V}_r=v_1,\ldots,v_j$ denote the child vertices of $r$. Since $C$ is associated with an outerplane rotation system $\mathcal{E}$, the successor set $\mathcal{S}_r=\mathcal{C}_r \cup\mathcal{V}_r$ is well-ordered with respect to $\mathcal{E}$.
We assign the first successor $s_1 \in \mathcal{S}_r$ the first $L(C_{s_1})+2O(C_{s_1})$ primitive Pythagorean triples $\p_{s_1}^\circ$ from $\p^\circ_{t+2o}$,
the second successor $s_2 \in \mathcal{S}_r$ the next $L(C_{s_2})+2O(C_{s_2})$ primitive Pythagorean triples $\p_{s_2}^\circ$ from $\p^\circ_{t+2o}$, and so on.
The child vertices $\mathcal{V}_r$ are handled as in the algorithm for arbitrary trees in \cref{sec:trees}.
For a cycle $c \in\mathcal{C}_r$ at root $r$, we maintain the following properties:
The first $L_\rightarrow(C_c)+2O_\rightarrow(C_c)$ triples are reserved to the subcacti rooted at $p_\rightarrow$. The next two Pythagorean triples $(x_\rightarrow,y_\rightarrow,\ell_\rightarrow)$ and $(x_{\leftarrow},y_{\leftarrow},\ell_{\leftarrow})$ are used for drawing $p_{\leftarrow}$ and $p_\rightarrow$; creating a parallelogram-shaped representation of $c$ as detailed below if $c$ is not a triangle; if $c$ is a triangle, we also incorporate a single horizontal segment.
Finally, the remaining $L_{\leftarrow}(C_c)+2O_{\leftarrow}(C_c)$ triples are reserved for the subcacti rooted at $p_{\leftarrow}$.

Let $\rho_1^c,\ldots,\rho_j^c$ denote the cut-vertices that are internal vertices along $p_\rightarrow(c)$, and
let  $\lambda_1^c,\ldots,\lambda_i^c$ denote the cut-vertices along $p_{\leftarrow}(c)$ without $s_c$ (it might be that $\lambda_i^c=t_c$).
We assign the $\leaves(C_c)+2O(C_c)$ triples in order.
First, we assign each $\rho_k^c$ with $k \in \{1,\ldots,j\}$ the required $\leaves(C_{\rho_k^c})+2O(C_{\rho_k^c})$ triples in increasing order of $k$.
The next two triples  called $(x_\rightarrow,y_\rightarrow,\ell_\rightarrow)$ and $(x_{\leftarrow},y_{\leftarrow},\ell_{\leftarrow})$, 
are not assigned to any child and instead used for drawing $c$.
Finally, the remaining triples are assigned to each $\lambda_k^c$ with $k \in \{1,\ldots,i\}$, each subcactus gets the required  $t(C_{\lambda_k^c})+2O(C_{\lambda_k^c})$ triples, but this time in decreasing order of $k$.
The child components rooted at $c$ contain in total one cycle less, thus, this assignment is valid.
We use $\p(v)$ to denote the ordered set of primitive Pythagorean triples assigned to a vertex~$v$.

Before we describe how to draw cycles, we discuss the recursion step, which is similar to \cref{sec:trees}.
For each vertex $v \in \mathcal{V}_r$, draw $C_v$ recursively with $v$ as root,
and for each cycle $c \in \mathcal{C}_r$ and each cut-vertex~$w$ in~$c$ where $w \ne s_c$, draw $C_w$ recursively with $w$ as root.
As before, we use only the assigned triples for drawing edges inside subcacti; see also \cref{fig:cactus}.

\begin{figure}[t]
        \centering
        \includegraphics[scale=1.25,page=4]{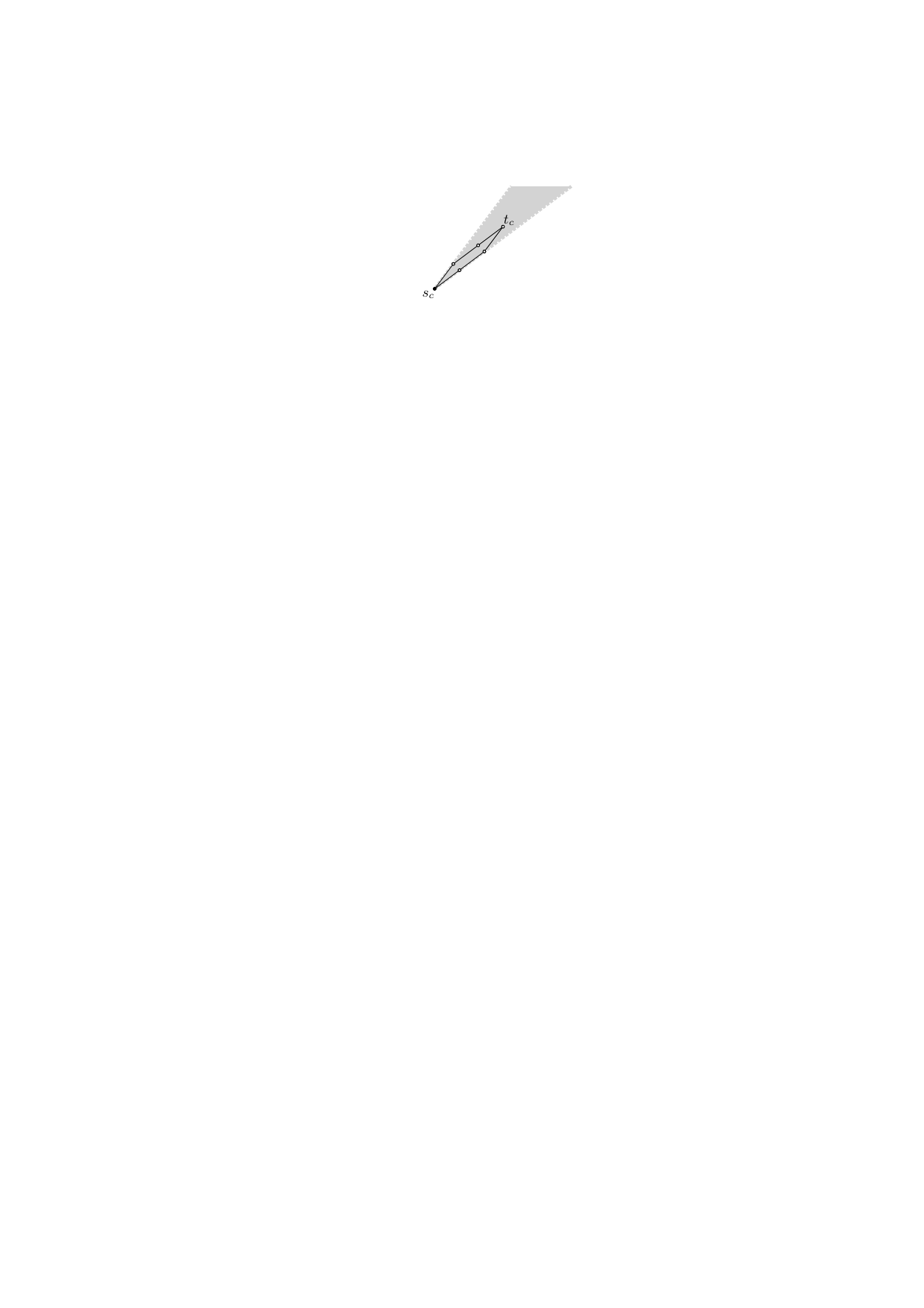}
        \caption{Schematization of our recursive drawing algorithm in \cref{sec:cacti}: Vertex $r$ is placed at  $(0,0)$ and its subtree is drawn inside the light-gray cone according to \cref{inv:cone}. Each of its successor components is assigned an exclusive set of slopes resulting in the non-overlapping gray cones centered at the child components. For child vertex $v_i$, the edge between $r$ and $v_i$ is drawn using the first triple assigned to $v_i$. For a child cycle $c_j$, two slopes are used to create either a parallelogram-shaped or triangle-shaped cycle. Subcacti rooted at cycles are assigned a subset of the slopes assigned to the cycle as indicated by the dark-gray cones.}
        \label{fig:cactus}
    \end{figure}

\subparagraph{Canonical drawing of cycles.} Drawings of cycles occur both in the base case and in the recursive step, thus, we first discuss them independently.
Let $(x_\rightarrow,y_\rightarrow,\ell_\rightarrow)$ and $(x_{\leftarrow},y_{\leftarrow},\ell_{\leftarrow})$ denote the two triples assigned for drawing $c$ such that $\alpha(x_\rightarrow,y_\rightarrow,\ell_\rightarrow)<\alpha(x_{\leftarrow},y_{\leftarrow},\ell_{\leftarrow})$, $p_{\leftarrow}(c)=(s_c=\lambda_1,\lambda_2,...,\lambda_i=t_c)$ and $p_\rightarrow(c)=(s_c=\rho_1,\rho_2,...,\rho_j=t_c)$.
We assume, w.l.o.g., that $s_c$ is placed at $(0,0)$ and distinguish three cases depending on lengths $i$ and $j$.  
\begin{itemize}
    \item $i=j$ (see also \cref{fig:canonical:1}):
    In this case, $t_c = \lambda_i = \rho_j$.
    Starting from $s_c$, we draw $p_\rightarrow(c)$. Namely, for $2 \leq k < j$, we place $\rho_k$ at position $(x_{k-1}+x_\rightarrow,y_{k-1}+y_\rightarrow)$ where $(x_{k-1},y_{k-1})$ denotes the position of $\rho_{k-1}$. Finally, we place $t_c$ at $(x_{j-1}+x_{\leftarrow},y_{j-1}+y_{\leftarrow})$. Then, we draw $p_{\leftarrow}(c)$ analogously. Namely, we place $\lambda_2$ at $(x_{\leftarrow}, y_{\leftarrow})$ and then, for $3 \leq k \leq  i$, we place $\lambda_k$ at position $(x_{k-1}+x_\rightarrow,y_{k-1}+y_\rightarrow)$ where $(x_{k-1},y_{k-1})$ denotes the position of $\lambda_{k-1}$. That is, on either path, we have one edge realized with $(x_{\leftarrow},y_{\leftarrow},\ell_{\leftarrow})$ and $i-1$ edges realized with $(x_\rightarrow,y_\rightarrow,\ell_\rightarrow)$. Thus, $t_c$ occurs at the same position on each path.
    \item $i=j+1$ and $j>1$ (see also \cref{fig:canonical:2}):  Starting from $s_c$, we again first draw $p_\rightarrow(c)$. As previously, for $2 \leq k < j$, we place $\rho_k$ at position $(x_{k-1}+x_\rightarrow,y_{k-1}+y_\rightarrow)$ where $(x_{k-1},y_{k-1})$ denotes the position of $\rho_{k-1}$. Finally, we place $t_c$ at $(x_{j-1}+2\cdot x_{\leftarrow},y_{j-1}+2\cdot y_{\leftarrow})$. For $p_{\leftarrow}(c)$, we now first place $\lambda_2$ at $(x_{\leftarrow},y_{\leftarrow})$ and then $\lambda_3$ at $(2\cdot x_{\leftarrow},2\cdot y_{\leftarrow})$. Then, we proceed as previously, i.e., for $3 \leq k \leq  i$, we place $\lambda_k$ at position $(x_{k-1}+x_\rightarrow,y_{k-1}+y_\rightarrow)$ where $(x_{k-1},y_{k-1})$ denotes the position of $\lambda_{k-1}$. That is, on either path,  $i-1$ edges are realized with $(x_\rightarrow,y_\rightarrow,\ell_\rightarrow)$.
    Moreover, path $p_{\leftarrow}(c)$ has two edges realized with $(x_{\leftarrow},y_{\leftarrow},\ell_{\leftarrow})$, whereas path $p_\rightarrow(c)$ has one edge realized with the non-primitive Pythagorean triple $(2x_{\leftarrow},2y_{\leftarrow},2\ell_{\leftarrow})$. Thus, $t_c$ occurs at the same position on each path.
    \begin{figure}[t]
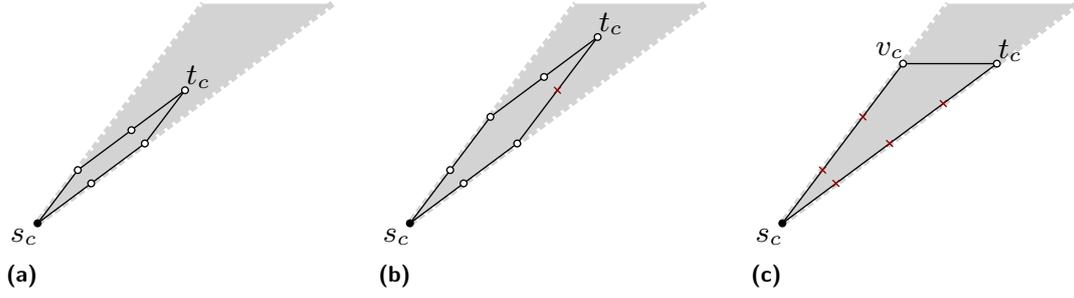

    \centering
    \begin{subfigure}{0.3\textwidth}
    \centering
    \includegraphics[scale=1.25]{cactus.pdf}
    \subcaption{}
    \label{fig:canonical:1}    
    \end{subfigure}
    \hfill
    \begin{subfigure}{0.3\textwidth}
    \centering
    \includegraphics[scale=1.25,page=2]{cactus.pdf}
    \subcaption{}
    \label{fig:canonical:2}    
    \end{subfigure}
    \hfill
    \begin{subfigure}{0.3\textwidth}
    \centering
    \includegraphics[scale=1.25,page=3]{cactus.pdf}
    \subcaption{}
    \label{fig:canonical:3}    
    \end{subfigure}
    \caption{Canonical drawings of a cycle $c$ where $i$ and $j$ denote the lengths of paths $p_{\leftarrow}(c)$ and $p_\rightarrow(c)$, respectively. (a)~$i=j$, (b)~$i=j+1$ and $j>1$, (c)~$i=2$ and $j=1$. Red crosses indicate endpoints of segments realized according to a primitive Pythagorean triple with no placed vertex.}
    \label{fig:canonical}
\end{figure}
    \item $i=2$ and $j=1$ (see also \cref{fig:canonical:3}): Here, $c$ is triangle consisting of three vertices $s_c$, $t_c$ and $v_c$. Let $Y_{\leftarrow}=\text{lcm}(y_{\leftarrow},y_\rightarrow)/y_{\leftarrow}$ and $Y_\rightarrow=\text{lcm}(y_{\leftarrow},y_\rightarrow)/y_\rightarrow$ where $\text{lcm}(y_{\leftarrow},y_\rightarrow)$ denotes the least common multiple of $y_{\leftarrow}$ and $y_\rightarrow$.
    We place $v_c$ at $(x_{\leftarrow} \cdot Y_{\leftarrow},y_{\leftarrow} \cdot Y_{\leftarrow})$ and $t_c$ at $(x_\rightarrow \cdot Y_\rightarrow,y_\rightarrow \cdot Y_\rightarrow)$, i.e., we use the non-primitive Pythagorean triples $(Y_{\leftarrow} \cdot x_{\leftarrow},Y_{\leftarrow} \cdot y_{\leftarrow},Y_{\leftarrow} \cdot \ell_{\leftarrow})$ and  $(Y_\rightarrow \cdot x_\rightarrow,Y_\rightarrow \cdot y_\rightarrow,Y_\rightarrow \cdot \ell_\rightarrow)$ for drawing the edges $(s_c,v_c)$ and $(s_c,t_c)$.
    The remaining edge $(v_c,t_c)$ is horizontal since $y_{\leftarrow} \cdot Y_{\leftarrow} = \text{lcm}(y_{\leftarrow},y_\rightarrow)=y_\rightarrow\cdot Y_\rightarrow$. Note that $\text{lcm}(y_{\leftarrow},y_\rightarrow)\leq y_{\leftarrow} \cdot y_\rightarrow$.
\end{itemize}
We call the drawings constructed using the rules above the \emph{canonical drawing} of a cycle $c$. 

\subparagraph{Properties of canonical drawings of cycles.} It will be  helpful to refer back to the following properties of canonical drawing of a cycle $c$, the first three of which are reminiscent of \cref{inv:plane:cactus,inv:cone:cactus,inv:grid:cactus}
appearing in the correctness proofs for arbitrary trees and cacti below.
\begin{enumerate}
    \item \label{prop:cycle:1}  The drawing of~$c$ is a truly integral F\'ary embedding with $s_c$ at position $(0,0)$.
    \item \label{prop:cycle:2} The drawing of $c$ lies in a cone
        that is rooted at~$v$ and bounded by two rays emanating at~$r$
        and having slopes $y_\rightarrow/x_\rightarrow$ and $y_{\leftarrow} / x_{\leftarrow}$.
        \item \label{prop:cycle:3} In the drawing of~$c$, the Euclidean distance between
        $s_c$ and any vertex $v$ of~$c$ is at most $\left\lceil \frac{\text{len}(c)}{2} \right\rceil \cdot \frac{2\pi^2}{3}\cdot (t+2o)$ if the length $\text{len}(c)$ of $c$ is $\text{len}(c) \geq 4$. If $\text{len}(c) = 3$, the distances are at most $2\left(\frac{\pi^2}{3}\cdot  (t+2o)\right)^2$.
        \item \label{prop:cycle:4} Paths $p_{\leftarrow}(c)$ and $p_\rightarrow(c)$ are drawn monotone w.r.t.\ to a projection on the $(x_{\leftarrow},y_{\leftarrow})$- or $(x_\rightarrow,y_\rightarrow)$-direction. Moreover, no ray emanating from a point on $p_{\leftarrow}(c)$ of angle strictly greater than $\alpha(x_{\leftarrow},y_{\leftarrow},\ell_{\leftarrow})$ and  no ray emanating from a point on $p_\rightarrow(c)$ of angle strictly smaller than $\alpha(x_\rightarrow,y_\rightarrow,\ell_\rightarrow)$  intersects the drawing of $c$.
        \item \label{prop:cycle:5} The rays emanating from $s_c$ through the remaining vertices of $c$ can be sorted increasingly according to slope as follows: $\overrightarrow{c\rho_1},\overrightarrow{c\rho_2},\ldots,\overrightarrow{c\rho_j},\overrightarrow{c\lambda_i},\overrightarrow{c\lambda_{i-1}},\ldots,\overrightarrow{c\lambda_1}$.
\end{enumerate}
It is easy to verify that Properties~\ref{prop:cycle:1}--\ref{prop:cycle:4} are ensured by the construction; see also \cref{fig:canonical}.
We are now ready include to cycles in the discussion of the base case and the recursive case.

\subparagraph{Correctness.} For establishing the correctness, we maintain the following invariants for each vertex~$v$ that is a cut-vertex, the root, or a leaf of~$C$:
\begin{enumerate}
    \item\label{inv:plane:cactus} The drawing of~$C_v$ is a truly integral F\'ary embedding.
    \item\label{inv:cone:cactus} The drawing of $C_v$ lies in a cone
    that is rooted at~$v$ and bounded by two rays emanating at~$r$
    and having slopes $y_1/x_1$ and $y_{\leaves(C_v)+2O(C_v)} / x_{\leaves(C_v)+2O(C_v)}$,
    where $(x_1, y_1, \ell_1)$ and $(x_{\leaves(C_v)+2O(C_v)}, y_{\leaves(C_v)+2O(C_v)}, \ell_{\leaves(C_v)+2O(C_v)})$
    denote the first and the last Pythagorean triple of~$\p(v)$, respectively.
    \item\label{inv:grid:cactus} In the drawing of~$C_v$, the  Euclidean distance between
    $v$ and any vertex of~$C_v$ is at most $(d(C_v) + O(C_v)) \cdot \frac{2 \pi^2}{3}\cdot  (t+2o)+\Delta(C_v)\cdot 2\left(\frac{\pi^2}{3}\cdot (t+2o)\right)^2$. 
\end{enumerate}

\subparagraph{Base Cases.} There are two base cases for cacti.
The first one is the leaf base case which we covered in \cref{sec:trees}.
The second one is a single cycle $c$ with no child components -- we use the canonical drawing of $c$. Properties~\ref{prop:cycle:1}--\ref{prop:cycle:3} of the canonical drawing immediately imply \cref{inv:plane:cactus,inv:cone:cactus,inv:grid:cactus}.
In particular, observe that $\lceil \text{len}(c)/2 \rceil \le d(c) + 1$ if $\text{len}(c) \ge 4$.

\subparagraph{Recursion: Combining child components of a cut-vertex $u$.} We consider the combination step where $u$ is the root of $C_u$. Moreover, $u$ has successor components $\mathcal{S}_u=\mathcal{C}_u\cup\mathcal{V}_u=\{s_1,\ldots,s_k\}$ where $\mathcal{C}_u$ is the set of child cycles with origin $u$ and $\mathcal{V}_u$ is the set of child vertices of $u$.
\begin{figure}
    \centering
    \includegraphics[scale=1.25,page=5]{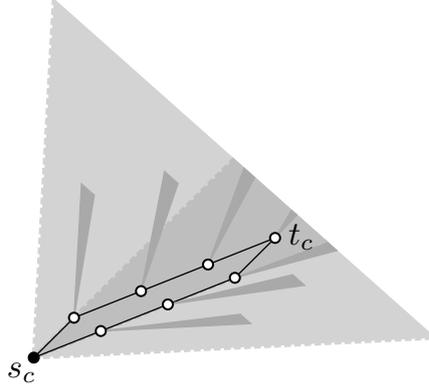}
    \caption{Schematization of the combination of child-components at a cycle.}
    \label{fig:cycle-combo}
\end{figure}
We assume that we have drawn $C_{s_1},\ldots,C_{s_k}$ according to \cref{inv:cone:cactus,inv:grid:cactus,inv:plane:cactus} and that if $s_i$ (for $i\in \{1,\ldots,k\}$) is a cycle it is already drawn in its canonical drawing using the slopes we assigned initially.
We obtain a drawing for $C_u$ by combining the drawings as follows.

First, we place $u$ at $(0,0)$ and place the vertices $v \in \mathcal{V}_u$ using the first primitive Pythagorean triple of $\p(v)$ as in the previous case.
Moreover, we add the canonical drawings of the child cycles $c \in \mathcal{C}_u$; recall that $u$ is the origin of all of them.
Then, for each cut-vertex $w$ on a cycle $c$, we translate the recursively computed drawing of $C_w$ to its position on $c$.
Due to Property~\ref{prop:cycle:1} for cycles and \cref{inv:plane:cactus} for recursively constructed subdrawings, all vertices of $C_u$ are located on grid points and all edges of $C_u$ have integer length.
Thus, to obtain \cref{inv:plane:cactus}, we have to ensure that there is no crossing in $C_u$.
First, observe that the edges incident to $u$ do not intersect and that cycles are plane according to Property~\ref{prop:cycle:1}.
Second, assume that there is a crossing in the subdrawing of a child component.
For $\mathcal{V}_u$, we can argue as in \cref{sec:trees}.
For $c\in\mathcal{C}_u$ observe the following: The slopes of edges in subcacti rooted on $p_{\leftarrow}(c)$ are steeper than the slope $y_{\leftarrow}/x_{\leftarrow}$ due to the assignment of triples discussed above.
Similarly, the slopes of edges in subcacti rooted on $p_\rightarrow(c)$ are flatter than the slope $y_\rightarrow/x_\rightarrow$.
Thus, by Property~\ref{prop:cycle:4} of the canonical drawing of $c$, the only possible intersections can be between two subgraphs rooted at $p_{\leftarrow}(c)$ or between two subgraphs rooted at $p_\rightarrow(c)$.
For $p_\rightarrow(c)$, recall that the cut-vertices are monotone in $(x_\rightarrow,y_\rightarrow)$-direction.
Moreover, the triples assigned to cut-vertices along $p_\rightarrow(c)$ are increasing in their angles (but having smaller angle than $(x_\rightarrow,y_\rightarrow,\ell_\rightarrow)$.
Therefore, no intersection is possible, as each subdrawing along $p_\rightarrow(c)$ uses flatter slopes than the succeeding ones,
which complies with the sorting of rays $\overrightarrow{s_c\rho}$ as stated in Property~\ref{prop:cycle:5}; see \cref{fig:cycle-combo}. 
Similarly, for $p_{\leftarrow}(c)$, recall that the cut-vertices are monotone in $(x_{\leftarrow},y_{\leftarrow})$-direction.
Moreover, the triples assigned to cut-vertices along $p_{\leftarrow}(c)$ are decreasing in their angles.
Therefore, no intersection is possible, as each subdrawing along $p_{\leftarrow}(c)$ uses steeper slopes than the succeeding ones,
which complies with the sorting of rays as stated in Property~\ref{prop:cycle:5}; see \cref{fig:cycle-combo}.
Finally, intersections between different components with origin $u$ are excluded due to Property~\ref{prop:cycle:2} and \cref{inv:cone:cactus}.

For \cref{inv:cone:cactus}, the argument from \cref{sec:trees} still applies as canonical drawings of cycles maintain Property~\ref{prop:cycle:2} and we guarantee \cref{inv:cone:cactus} for all recursively constructed drawings.

Finally, consider the maximum Euclidean distance between $u$ and a vertex $x$ in a subcactus rooted at $s\in\mathcal{S}_u$. For the child vertices, the argument is identical to \cref{sec:trees} (using \cref{inv:cone:cactus} for cacti). Hence, it remains to discuss the case where $s \in \mathcal{C}_u$.  Let $\mathcal{X}_c$ denote the cut-vertices on $c$ except $u$. Then, due to \cref{inv:grid:cactus} and Property~\ref{prop:cycle:3}, if $\text{len(s)}\geq 4$ we obtain:
    \begin{align*}
        \mathrm{dist}(u, x) \le& \left\lceil \frac{\text{len}(c)}{2} \right\rceil \cdot \frac{2 \pi^2}{3} \cdot (t+2o) + \\
        &\max\limits_{w\in\mathcal{X}_c} \left( \left(d(C_w) + O(C_w)\right)  \cdot \frac{2 \pi^2}{3} \cdot (t+2o) + \Delta(C_w)\cdot 2\left(\frac{\pi^2}{3}\cdot  (t+2o)\right)^2\right) \\
        \le& \left(d(C_u) + O(C_u) \right) \cdot \frac{2 \pi^2}{3} \cdot (t+2o)+\Delta(C_u)\cdot 2\left(\frac{\pi^2}{3}\cdot  (t+2o)\right)^2
    \end{align*}
If  $\text{len(s)}=3$, each subcactus rooted on $c$ has at least one triangle less than $C_u$, and we get:
    \begin{align*}
        \mathrm{dist}(u, x) \le& ~~2\left(\frac{\pi^2}{3}\cdot (t+2o)\right)^2 + \\
        &\max\limits_{w\in\mathcal{X}_c} \left( \left(d(C_w) + O(C_w) \right) \cdot \frac{2 \pi^2}{3} \cdot (t+2o)+\left(\Delta(C_w)-1\right)\cdot 2\left(\frac{\pi^2}{3}\cdot  (t+2o)\right)^2 \right)\\
        \le& \left(d(C_u) + O(C_u) \right)  \cdot \frac{2 \pi^2}{3} \cdot (t+2o)+\Delta(C_u)\cdot 2\left(\frac{\pi^2}{3}\cdot  (t+2o)\right)^2
    \end{align*}
Thus, in either case we obtain \cref{inv:grid:cactus}.     \cref{inv:plane:cactus,inv:cone:cactus,inv:grid:cactus} for the root~$r$ imply \cref{thm:cactus}.
\begin{theorem} \label{thm:cactus}
    Let $C = (V, E)$ be a (rooted) cactus, which has $n$ vertices, $t$ leaves, diameter~$d$, $o$ cycles out of which $\delta$ are triangles.
    There is a truly integral F\'ary embedding on a grid of size $\left(\frac{2 \pi^2}{3} \left( d + o \right)  \left(t + 2o \right) +\delta\cdot 2\left(\frac{\pi^2}{3}\cdot  (t+2o)\right)^2\right) \times \left(\frac{2 \pi^2}{3} \left( d + o \right)  \left(t + 2o \right) +\delta\cdot 2\left(\frac{\pi^2}{3}\cdot  (t+2o)\right)^2\right) \allowbreak \subseteq \oh(n^3) \times \oh(n^3)$,
    which can be found in $\oh(n + (t+2o)^{3/2})$ time.
\end{theorem}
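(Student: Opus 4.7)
\medskip

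\noindent\textbf{Proof proposal.} The plan is to establish the theorem by structural induction, maintaining the three invariants \cref{inv:plane:cactus,inv:cone:cactus,inv:grid:cactus} for every vertex $v$ that is a cut-vertex, the root, or a leaf of $C$, and then evaluating them at the root. The skeleton is parallel to the tree case, with the canonical cycle drawings (and their Properties~\ref{prop:cycle:1}--\ref{prop:cycle:5}) playing the role of the single-edge building block used for trees.

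First, I would dispatch the base cases. A leaf is immediate: $v$ sits at $(0,0)$ and all three invariants hold trivially. For a single cycle $c$ with no child components drawn in its canonical form, Properties~\ref{prop:cycle:1}--\ref{prop:cycle:3} give \cref{inv:plane:cactus,inv:cone:cactus,inv:grid:cactus} essentially verbatim, observing that $\lceil \text{len}(c)/2 \rceil \le d(c)+1$ when $\text{len}(c) \ge 4$, and that the triangle case contributes to the $\Delta$-term of \cref{inv:grid:cactus}.

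Next, I would handle the recursive combination at a cut-vertex $u$. Place $u$ at $(0,0)$, draw each child vertex $v \in \mathcal{V}_u$ using the first triple of $\mathcal{P}(v)$, draw each child cycle $c \in \mathcal{C}_u$ canonically using the two triples reserved for $c$, and translate each recursively computed subdrawing $C_w$ so that $w$ lands at its already-assigned position. Since Property~\ref{prop:cycle:1} and \cref{inv:plane:cactus} (by induction) guarantee integer coordinates and edge lengths, only planarity is nontrivial. The main obstacle, and the technical heart of the argument, is to rule out crossings inside a child cycle $c$: here I would invoke Property~\ref{prop:cycle:4} of the canonical drawing together with the deliberate slope assignment to show that subcacti attached along $p_\rightarrow(c)$ use angles strictly flatter than $\alpha(x_\rightarrow,y_\rightarrow,\ell_\rightarrow)$ (so they stay on the ``outside'' of $c$ on that side), and symmetrically for $p_\leftarrow(c)$. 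Within each side, the monotonicity of cut-vertex positions along the path combined with the ray ordering of Property~\ref{prop:cycle:5} forces successive subcacti to occupy disjoint wedges, exactly as in \cref{sec:trees}. Crossings between different successor components of $u$ are excluded by the disjoint-cone argument as in the tree case, using \cref{inv:cone:cactus}. Then \cref{inv:cone:cactus} for $u$ follows because the assigned triples span a contiguous slope range in $\mathcal{P}^\circ_{t+2o}$.

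Finally, I would close the distance invariant by the triangle inequality: the vertex $x$ lies in some subcactus $C_w$ attached either to a child vertex or to a cut-vertex on a child cycle. In the first case the argument from \cref{sec:trees} transfers; in the second, I would bound $\mathrm{dist}(u,x)$ by the maximum distance from $u$ to a point on the canonical drawing of the cycle (Property~\ref{prop:cycle:3}) plus the inductive bound for $C_w$, absorbing a triangle into the $\Delta$-count when $\text{len}(c)=3$ so that the recursion closes cleanly into \cref{inv:grid:cactus}. Evaluating \cref{inv:grid:cactus} at the root with $d(C_r)=d$, $O(C_r)=o$, $\Delta(C_r)=\delta$ then yields the claimed grid size, which is $\oh(n^3)\times\oh(n^3)$ since $t,o,\delta,d \in \oh(n)$. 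The running time breaks into a BFS for orienting the edges ($\oh(n)$), the computation of the first $t+2o$ primitive Pythagorean triples ($\oh((t+2o)^{3/2})$ by \cref{lem:compute_triples}), and a linear-time recursive assignment and translation, giving the stated $\oh(n + (t+2o)^{3/2})$ bound.
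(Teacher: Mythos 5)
Your proposal follows essentially the same route as the paper's proof: the same three invariants \cref{inv:plane:cactus,inv:cone:cactus,inv:grid:cactus}, the same two base cases (leaf and canonically drawn cycle via Properties~\ref{prop:cycle:1}--\ref{prop:cycle:3}), the same planarity argument at a cut-vertex using Properties~\ref{prop:cycle:4} and~\ref{prop:cycle:5} together with the slope assignment along $p_\leftarrow(c)$ and $p_\rightarrow(c)$, the same distance recursion with the triangle absorbed into the $\Delta$-count, and the same evaluation at the root plus the $\oh(n+(t+2o)^{3/2})$ running-time breakdown. I see no gap; this is correct and matches the paper's argument.
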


See \cref{fig:cactus-example} for an example output.
\todo{R1 suggests to use a figure where the triangles are not as bad.}
Note that for cacti with a constant number of odd cycles, we achieve area $\mathcal{O}(n^2) \times \mathcal{O}(n^2)$ as we did in the case for trees. If in addition, the diameter $d$ is logarithmic, we once more achieve a grid size of $\mathcal{O}(n\log n)\times\mathcal{O}(n\log n)$.

\section{Concluding Remarks and Open Problems}
        \begin{figure}[t!]
        \centering
        \includegraphics[width=\textwidth,page=7]{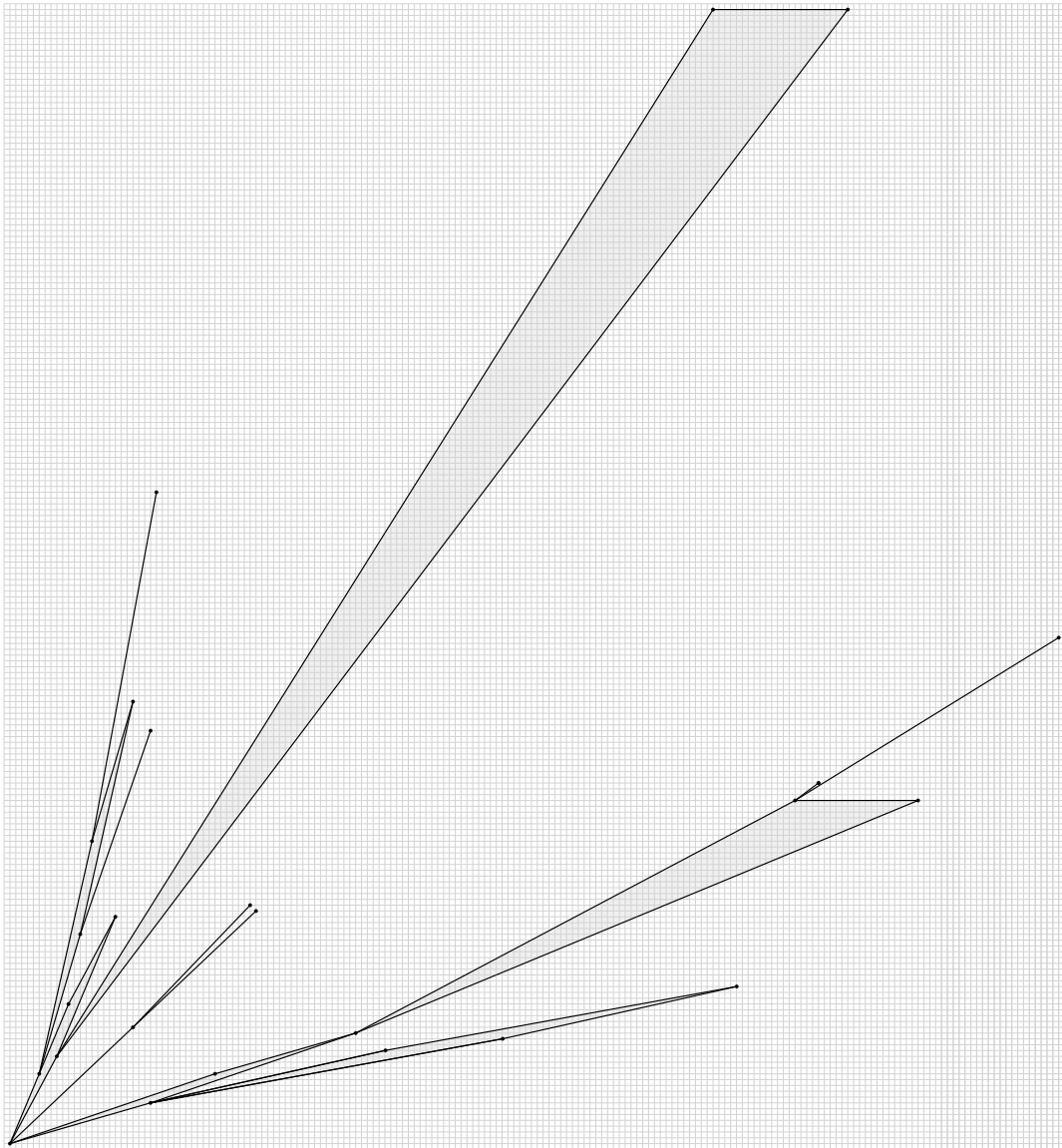}
        \caption{Example output of our algorithm for general cacti. Observe the area blowup caused by the two triangular cycles.}
        \label{fig:cactus-example}
    \end{figure}
Our $\oh(n^{3/2})$-time algorithms become linear-time algorithms
if we are given the Pythagorean triples 
instead of computing them from scratch every time.
We conclude with open questions:
\begin{enumerate}
    \item Is Harborth's conjecture true for all planar graphs?
    \item Provide an area lower bound for truly integral F\'ary embeddings for binary trees.
    We conjecture that these require $\Omega(n^2)$ area.
    What is the lower bound for truly integral F\'ary embeddings of (general) trees and cacti?
    \item Compute area upper bounds for truly integral F\'ary embeddings of other graph classes, e.g., more general graphs of treewidth two, like outerplanar graphs, or planar graphs of maximum degree four.
    \item Are Harborth's and Kleber's conjectures equivalent? What if we require polynomial area for truly integral F\'ary embeddings?
\end{enumerate}

\bibliography{int-fary-bib}

\end{document}